\newtheorem{theorem}{Theorem}
\newtheorem{lemma}{Lemma}
\newtheorem{remark}{Remark}
\begin{document}
\title{Performance Analysis and Optimal Design of HARQ-IR-Aided Terahertz Communications}
\author{Ziyang~Song,
        Zheng~Shi,
        Jiaji Su,
        Qingping Dou,
        Guanghua Yang,
        Haichuan~Ding,
        and Shaodan Ma
\thanks{\emph{Corresponding Author: Zheng Shi.}}
\thanks{Ziyang~Song, Zheng Shi, Jiaji Su, Qingping Dou, and Guanghua Yang are with the School of Intelligent Systems Science and Engineering, Jinan University, Zhuhai 519070, China (e-mails: 1670300710szy@stu2020.jnu.edu.cn; zhengshi@jnu.edu.cn; sjj123@stu2020.jnu.edu.cn; tdouqingping@jnu.edu.cn; ghyang@jnu.edu.cn).}
\thanks{Haichuan Ding is with the School of Cyberspace Science and Technology, Beijing Institute of Technology, Beijing 100081, China (email: hcding@bit.edu.cn).}
\thanks{Shaodan Ma is the State Key Laboratory of Internet of Things for Smart City, University of Macau, Macau, China (e-mail: shaodanma@um.edu.mo).}
}
\maketitle
\begin{abstract}
Terahertz (THz) communications are envisioned to be a promising technology for 6G thanks to its broad bandwidth. However, the large path loss, antenna misalignment, and atmospheric influence of THz communications severely deteriorate its reliability. To address this, hybrid automatic repeat request (HARQ) is recognized as an effective technique to ensure reliable THz communications. This paper delves into the performance analysis of HARQ with incremental redundancy (HARQ-IR)-aided THz communications in the presence/absence of blockage. More specifically, the analytical expression of the outage probability of HARQ-IR-aided THz communications is derived, with which the asymptotic outage analysis is enabled to gain meaningful insights, including diversity order, power allocation gain, modulation and coding gain, etc. Then the long term average throughput (LTAT) is expressed in terms of the outage probability based on renewal theory. Moreover, to combat the blockage effects, a multi-hop HARQ-IR-aided THz communication scheme is proposed and its performance is examined. To demonstrate the superiority of the proposed scheme, the other two HARQ-aided schemes, i.e., Type-I HARQ and HARQ with chase combining (HARQ-CC), are used for benchmarking in the simulations. In addition, a deep neural network (DNN) based outage evaluation framework with low computational complexity is devised to reap the benefits of using both asymptotic and simulation results in low and high outage regimes, respectively. This novel outage evaluation framework is finally employed for the optimal rate selection, which outperforms the asymptotic based optimization.

\end{abstract}
\begin{IEEEkeywords}
Blockage, hybrid automatic repeat request (HARQ), outage probability, terahertz (THz) communications.
\end{IEEEkeywords}
\IEEEpeerreviewmaketitle
\section{Introduction}\label{sec:int}

\IEEEPARstart
To meet the unprecedented increase of wireless devices and data traffic, there is an urgent need to boost the transmission rate. To accommodate the ever-growing demand, terahertz (THz) communications have emerged as a potential technique 
to support the peak data rate of more than 1~Tbps \cite{tataria20216g,9662195,9558848}. 
Different from the conventional low-frequency channel propagation characteristics, THz communications are susceptible to atmospheric turbulence, pointing errors, rain attenuation, etc. \cite{9787400,Kokkoniemi2015,Kokkoniemi2018,9530530,Priebe2012,9931325,Yacoub2007, Ekti2017,6998944}. As uncovered by \cite{9368251,9541155}
, these factors are harmful to the reception reliability of THz communications. To remedy this issue, it is important to come up with some methods for the reliability assurance of THz communications.

Recently, some efforts have been devoted to realize the reliable THz communications. More specifically, the reconfigurable intelligent surface (RIS) was proposed in \cite{9530530} to assist THz communications, where the RIS steers the reflected electromagnetic wave to the desired direction by adjusting the phase of the incident signal. In \cite{boulogeorgos2021coverage}, the connectivity analysis of RIS-assisted THz systems was carried out, and the results revealed that a minimum transmission power is required to guarantee 100\% coverage probability. The end-to-end SNR performance of THz communications over cascaded RISs was analyzed in \cite{9635819}. A stochastic approach was developed in \cite{9805798} to corroborate that the $\alpha$-$\mu$ distribution fits to the channel characterization of RIS-Aided THz communications. This then enabled the outage analysis, with which RIS-aided scheme was proved to be much more reliable than direct THz communications especially for a massive number of reflectors at the RIS. 
In \cite{9492775}, the relaying scheme with decode-and-forward (DF) mode was adopted to expand the coverage of THz communications in difficult-to-access terrain. In addition, as an important reliable transmission technique, hybrid automatic repeat request (HARQ) has been widely used in various communication systems. Thereupon, Type-I HARQ was employed to provide reliable THz communications in \cite{lopacinski2018implementation}, wherein its performance of bit error rate (BER) was studied through system-level simulations. To quantify the benefit of using HARQ, the outage performance of Type-I HARQ-aided and HARQ with chase combining (HARQ-CC)-aided THz communications was analyzed in \cite{song2021outage} from the information-theoretical perspective. 
Moreover, since HARQ with incremental redundancy (HARQ-IR) is superior to Type-I HARQ and HARQ-CC schemes due to its high coding complexity, it is necessary to scrutinize the performance of HARQ-IR-aided THz communications so as to substantially exploit the merits of HARQ. Thus, it motivates us to extend the analytical results in \cite{song2021outage} to the scenario of HARQ-IR-aided THz communications.

Nevertheless, unlike low frequency band, THz links are susceptible to be blocked by surrounding obstacles, thus yielding frequent communication interruptions. Clearly, the unfavorable impact of the blockage on THz communications cannot be disregarded. In \cite{9344846}, to capture the obstruction of walls and human bodies, etc., the coverage probability was studied by using the stochastic geometry. The analytical results revealed the detrimental impact of the node density, transmission distance, and atmosphere upon the coverage probability. Recently, some prior works have tried to solve this dilemma. In \cite{9127897}, massive multiple-input multiple-output (MIMO) aided THz communications were proposed for the blockage mitigation and a network sum-rate based optimization scheme was devised by considering only statistical channel state information (CSI). Moreover, in order to substantially overcome blockage impairment, one basic idea is altering the communication links so as to circumvent harsh propagation environment, which can be realized by combining relaying, reconfigurable intelligent surfaces (RIS), unmanned aerial vehicles (UAVs), etc. \cite{9083750,9743437,9672716,9370130}. More specifically, in \cite{9083750}, several relaying schemes were developed to address the blockage issue in THz communications, in which the performance gain from using relaying was rigorously evaluated. 
In \cite{9743437}, RIS was employed to aid THz communications and it is uncovered that the THz communications performance can be significantly enhanced by increasing the number of RIS's elements. Besides, a block coordinate searching algorithm was developed to jointly optimize the coordinates and phase shifts of RIS for THz communications in \cite{9672716}. Some other latest works \cite{9410457,9576697} also verified that integrating RIS into THz systems not only avoids the effect of blockage but also expands the coverage area. 
Moreover, in \cite{9370130}, the UAV was deployed to assist THz communications and the total transmission delay between the UAV and the ground user was minimized through optimizing the location, bandwidth and power, which can be solved with alternating algorithm. 
Unfortunately, most of prior optimization methodologies for designing THz communication systems have a prohibitively high computational complexity because of the complex fading characteristics of THz communications.

Motivated by the above issues, this paper thoroughly investigates the performance of HARQ-IR-aided THz communications in the presence/absence of blockage, with which the optimal system design is enabled. 
The major contributions of this paper are summarized as follows.
\begin{enumerate}
    \item  By considering $\alpha$-$\mu$ fading channels and pointing errors, the analytical outage and throughput expressions of HARQ-IR-aided THz communications are derived. The method of Abate and Whitt is then used to offer a highly accurate approximation for the outage expression. With the analytical results, the asymptotic outage analysis in the high SNR regime is performed to extract meaningful insights, include diversity order, power allocation gain, modulation and coding gain, etc.
  Additionally, it is worth noting that the outage analysis in this paper is more challenging than \cite{song2021outage} because of its complicated form of the accumulated mutual information by comparing to Type-I HARQ and HARQ-CC-aided schemes.
    \item  A multi-hop THz relaying scheme is proposed to mitigate the negative impact of blockage. To examine the impact of the blockage, both the outage probability and the long term average throughput (LTAT) of single-hop/multi-hop HARQ-IR-aided THz communications are obtained. 
    \item  A novel deep neural network (DNN) based outage evaluation framework with low computation complexity is devised to reap the benefits of using both asymptotic and simulation results in low and high outage regimes, respectively. This outage evaluation framework is then employed for the optimal rate selection. The numerical results indicate that the proposed framework surpasses the asymptotic based optimization in terms of the LTAT.
\end{enumerate}

In summary, the major novelties of this paper consist of providing an asymptotically analytical approach to capture the physical insights of HARQ-IR-aided THz communications, and developing a new DNN based outage evaluation framework to realize efficient THz system configurations.

The rest of this paper is structured as follows. Section \ref{sec:sys_mod} delineates the system model of HARQ-IR-aided THz communications. The outage probability and throughput of HARQ-IR-aided THz communications without blockage are analyzed in Section \ref{sec:opa}. Section \ref{sec:Rm} further extends the analytical results to HARQ-IR-aided THz communications with blockage, in which a multi-hop relaying scheme is proposed to circumvent obstacles. In Section \ref{sec:NR}, the simulation results validate our analysis. Besides, a deep neural network based outage evaluation framework is developed to combine the advantages of simulated and asymptotic results, with which the optimal rate selection is enabled. Finally, Section \ref{sec:con} concludes this paper.
\section{System Model}\label{sec:sys_mod}
\subsection{HARQ-IR-aided THz Communications}
As opposed to \cite{song2021outage} that two simple HARQ schemes (i.e., Type-I HARQ and HARQ-CC) are considered to assist THz communications, we employ the HARQ-IR scheme to further enhance the reliability. According to the HARQ-IR mechanism, each original information message contained $b$ information bits is first encoded to a long codeword, which is then chopped into $K$ sub-codewords with equal length $\ell$. Hence, a maximum of $K$ transmissions is allowable for delivering each message. The $K$ sub-codewords will be sent one by one till the successful decoding of the message. Moreover, at the receiver's side, all the incorrectly received sub-codewords are stored in buffer for the subsequent combination with the next received sub-codeword. Based on the success/failure of the reception of the message, a positive/negative acknowledgement (ACK) message will be fed back to the transmitter. At the transmitter, once either the ACK message is received or the maximum number of transmissions is reached, the transmission for the next message is initiated. 

By assuming blocking fading channels, the received signal ${{\bf y}_k}$ at the $k$-th HARQ round reads as
\begin{equation}\label{eqn:channel_model}
{{\bf y}_k} = \sqrt{P_k}{h_k}{\bf s}_k + {{\bf w}_k},\,  1 \le k \le K,
\end{equation}
where ${P_k}$, $h_k$, and ${\bf s}_k$ represent the transmit power, the THz channel coefficient, and the signal vector in the $k$-th HARQ round, respectively, ${{\bf w}_k}$ is the complex additive white Gaussian noise (AWGN) with variance $N_0$.

\subsection{$\alpha$-$\mu$ Fading Channels}
A variety of experimental measurements have verified that the $\alpha$-$\mu$ fading channel model fits well with the statistical characteristic of THz wave propagations. More specifically, the $\alpha$-$\mu$ fading channel model consists of three components, including the molecular absorption loss \cite{Kokkoniemi2015,Kokkoniemi2018}, pointing error \cite{9530530,Priebe2012,9931325}, and multipath fading \cite{Yacoub2007, Ekti2017}. In the meantime, by considering the versatility of $\alpha$-$\mu$ fading channels that encompass Rayleigh, Nakagami-m, Weibull, etc., as special cases, we employ $\alpha$-$\mu$ fading for the channel modeling of THz communications in this paper. 
To account for both large-scale and small-scale fadings, the channel coefficient $h_k$ is modeled as ${h_k} = {h_l}{h_{pf,k}}$ according to \cite{boulogeorgos2019analytical}, where $h_l$ stands for the deterministic THz path gain and ${h_{pf,k}}$ characterizes the combing effect of pointing errors and multipath fading. More precisely, ${h_l}$ is given by
\begin{equation}\label{eqn:path gain}
{h_l} = \frac{{c\sqrt {{G_{t}}{G_{r}}} }}{{4\pi {f}{d}}}\exp \left( { - \frac{1}{2}\kappa ({f},T,\psi ,p){d}} \right),
\end{equation}
where ${{G_{t}}}$ and ${{G_{r}}}$ denote the orientation dependent transmit and receive antenna gains, respectively \cite{balanis2011modern}, and higher antenna directivities are frequently needed in THz communications \cite{Priebe2012}; $c$, $f$, and $d$ denote the speed of light in free space, the carrier frequency, and the transmission distance, respectively; $\kappa ({f},T,\psi ,p)$ is the molecular absorption coefficient that is related to the temperature $T$, the relative humidity $\psi$ and the atmospheric pressure $p$. $\kappa ({f},T,\psi ,p)$ can be calculated by using \cite[eq. (4)]{boulogeorgos2019analytical}, which is omitted here to save space. For mathematical tractability, the maximum antenna gains $G_t$
and $G_r$ are assumed and the antenna misalignment is captured by the pointing error. The pointing error is possibly affected by various random factors, such as wind loads, thermal expansions, etc. \cite{farid2007}, which can be characterized by a compound distribution of Gaussian and Rayleigh \cite{farid2007,boulogeorgos2019analytical}. By combining the effects of pointing error and multipath fading, as derived in \cite{boulogeorgos2019analytical}, the probability density function (PDF) of ${h_{pf,k}}$ is given by
\begin{equation}\label{eqn:PDF}
{f_{\left| {{h_{pf,k}}} \right|}}(x) = \frac{{\phi {\mu ^{\frac{\phi }{\alpha }}}{x^{\phi  - 1}}}}{{S_0^\phi \hat h_f^\phi \Gamma (\mu )}}\Gamma \left( {\frac{{\alpha \mu  - \phi }}{\alpha },\frac{{\mu {x^\alpha }}}{{S_0^\alpha \hat h_f^\alpha }}} \right),
\end{equation}
where $ \Gamma \left( a,x \right)$ denotes the upper incomplete Gamma function, $\alpha$ and $\mu$ are fading parameters that provide flexibility to model various channel conditions, ${\hat h_f}$ denotes the $\alpha$-root mean value of the fading channel envelope. ${S_0} = {\left| {{\rm erf}(\zeta )} \right|^2}$ is the fraction of the collected power if both the transmit and receive antennas are fully aligned, where $\zeta  = \sqrt \pi  {r}/\left( {\sqrt 2 {w_{{d_0}}}} \right)$, $r$ and ${w_{{d_0}}}$ denote the radius of the receive antenna effective area and the transmission beam footprint radius at a reference distance ${d_0}$, respectively. $\phi  = w_e^2/4\sigma _s^2$ denotes the ratio of normalized beam-width to the jitter, where ${w_e}$ and ${\sigma _s}$ are the equivalent beam width radius and the doubled spatial jitter standard deviation, respectively, and $w_e^2 = \sqrt \pi{w_{{d_0}}^2  \rm erf\left( \zeta  \right)} /\left( {2\zeta \exp \left( { - {\zeta ^2}} \right)} \right)$.
\subsection{Blockage Modeling}
To capture the blockage effects of different obstacles, various blockage models have been proposed for THz communications \cite{9083750,9344846,9145081,8757158,9247469,9399124,9158415,9575003}. In general, the blockage model can be further classified into two categories from the spatial dimensional perspective, i.e., two-dimensional (2D) and three-dimensional (3D) blockage models. In particular, 2D blockage models were proposed in \cite{9083750,9344846} by overlooking the heights of the obstacles. Nevertheless, the 3D blockage modeling is mostly considered for THz communications. For example, a self-blocking model was constructed to account for the blockage from the user itself in \cite{9145081}. To characterize the human blocking effect for THz communications, an indoor human blockage model was proposed in \cite{8757158}. For the outdoor environment, the mobility of the human was taken into account for blockage modeling in \cite{9575003}. By considering the blockage from both the wall and the human, hybrid blockage models were built in \cite{9247469,9399124}. Moreover, the authors in \cite{9158415} proposed a UAV-oriented blockage model for drone communications.




%
%
%
%
%
%
%

As a representative common blockage model, the human blockage model developed in \cite{9247469} is employed in this paper, wherein the mobility of people follows the random directional model (RDM) and a human body can be modeled as a cylinder with the height ${h_b}$ and the radius ${r_b}$. The locations of people are modeled according to a homogeneous Poisson point process (HPPP) in the $x-y$ plane with intensity of ${\lambda _b}$. Besides, the heights of the BS's and the receiver's antennas are assumed to be ${h_a}$ and ${h_u}$, respectively.



For practicability, we assume that ${h_a} > {h_b} > {h_u}$, the non-blocking probability ${P_N}$ and the blocking probability ${P_B}$ can be calculated as \cite{9247469}
\begin{equation}\label{eqn:nbl}
{P_N} = 1 - {P_B}
= {e^{ - \beta d}},
\end{equation}
where $\beta  = 2{\lambda _b}{r_b}{({{h_b} - {h_u}})}/{({{h_a} - {h_u}})}$.

\section{HARQ-IR-Aided THz Without Blockage}\label{sec:opa}
In this section, a point-to-point HARQ-IR-aided THz communication system without blockage is considered. From the information-theoretical perspective \cite{caire2001throughput}, the equivalent accumulated mutual information per symbol of HARQ-IR-aided THz communications after $K$ HARQ rounds is given by
\begin{equation}\label{eqn:harq IR}
{I}_K = \sum\limits_{k = 1}^K {{{\log }_2}(1 + {\rho _k}{{\left| {{h_l}} \right|}^2}{{\left| {{h_{pf,k}}} \right|}^2})}.
\end{equation}
where ${\rho _k}$ refers to the transmit signal-to-noise ratio (SNR) in the $k$-th HARQ round, i.e. ${\rho _k} = {P_k}/{N_0}$. In addition, \eqref{eqn:harq IR} entirely differs from the formulations of the accumulated mutual information for the Type-I-aided and the HARQ-CC-aided schemes in \cite{song2021outage}, where the accumulated mutual information is expressed as a logarithm of the maximum or the summation of multiple random variables.

To evaluate the transmission reliability of HARQ-IR-aided THz communications, the outage probability is one of the paramount performance metrics \cite{Millimeter}. 
In particular, the outage probability is defined as the probability of the event that the accumulated mutual information is less than the initial transmission rate $R$ \cite{9234486}. Accordingly, the outage probability of HARQ-IR-aided scheme can be expressed as
\begin{equation}\label{eqn:pout}
p_{out,K}{\rm{ = Pr}}\left\{ {{I}_K < R} \right\},
\end{equation}
where $R = b/\ell$. In what follows, the outage analysis will be conducted and the asymptotic outage probability will be derived to explore useful insights. 
\subsection{Outage Analysis}
By substituting (\ref{eqn:harq IR}) into (\ref{eqn:pout}) along with the logarithmic property, \eqref{eqn:pout} can be derived as
\begin{align}\label{eqn:IR_1}
{p_{out,K}}&{\rm{ = Pr}}\left\{ {{{\log }_2}\left(\prod\limits_{k = 1}^K {(1 + {\rho _k}{{\left| {{h_l}} \right|}^2}{{\left| {{h_{pf,k}}} \right|}^2})} \right) < R} \right\}\notag\\
 &= \Pr \left\{ {\prod\limits_{k = 1}^K {\left(1 + {\rho _k}{{\left| {{h_l}} \right|}^2}{{\left| {{h_{pf,k}}} \right|}^2}\right)}  < {2^R}} \right\}.
\end{align}
Accordingly, the derivation of (\ref{eqn:IR_1}) boils down to determining the distribution of the product of multiple random variables. To this end, the Mellin transform can be utilized to derive (\ref{eqn:IR_1}) \cite{debnath2014integral}. Moreover, (\ref{eqn:IR_1}) can be expressed as
\begin{equation}\label{eqn:IR_2}
p_{out,K} = \frac{1}{{2\pi \rm i}}\int_{{{\rm{c}}_1} - \rm i\infty }^{{{\rm{c}}_1}{\rm{ + i}}\infty } {\frac{{{2^{ - Rt}}}}{{ - t}}\phi \left( {t + 1} \right)dt},
\end{equation}
where ${{{\rm{c}}_1}}  < 0$, $\rm i = \sqrt { - 1}$, $\phi \left( t \right)$ is the Mellin transform of the PDF of ${\prod\nolimits_{k = 1}^K {(1 + {\rho _k} {{\left| {{h_l}} \right|}^2}{{\left| {{h_{pf,k}}} \right|}^2})} }$, and $\phi \left( t \right)$ is obtained by
\begin{align}\label{eqn:IR_3}
\phi \left( t \right) =& \mathbb E\left\{ {{{\left( {\prod\limits_{k = 1}^K {(1 + {\rho _k} {{\left| {{h_l}} \right|}^2}{{\left| {{h_{pf,k}}} \right|}^2})} } \right)}^{t - 1}}} \right\} \notag\\
 =& \prod\limits_{k = 1}^K {\frac{{\phi {\mu ^{\frac{\phi }{\alpha }}}}}{{S_0^\phi \hat h_f^\phi \Gamma (\mu )}}\int_0^\infty  {{{(1 + {\rho _k} {{\left| {{h_l}} \right|}^2}x_k^2)}^{t - 1}}} } x_k^{\phi  - 1}\notag\\
 &\times \Gamma \left( {\frac{{\alpha \mu  - \phi }}{\alpha },\frac{{\mu x_k^\alpha }}{{S_0^\alpha \hat h_f^\alpha }}} \right)d{x_k},
\end{align}
where $\mathbb E\left\{ \cdot \right\}$ is the expectation operator. By plugging (\ref{eqn:IR_3}) into (\ref{eqn:IR_2}), it follows that
\begin{align}\label{eqn:IR_4}
&p_{out,K} = \frac{1}{{2\pi \rm i}}\int_{{{\rm{c}}_1} - \rm i\infty }^{{{\rm{c}}_1}{\rm{ + i}}\infty } {\frac{{{2^{ - Rt}}}}{{ - t}}\prod\limits_{k = 1}^K {\frac{{\phi {\mu ^{\frac{\phi }{\alpha }}}}}{{S_0^\phi \hat h_f^\phi \Gamma (\mu )}}}}\times\notag\\
& \int\nolimits_0^\infty  {{{(1 + {\rho _k} {{\left| {{h_l}} \right|}^2}x_k^2)}^t}}  x_k^{\phi  - 1}\Gamma \left( {\frac{{\alpha \mu  - \phi }}{\alpha },\frac{{\mu x_k^\alpha }}{{S_0^\alpha \hat h_f^\alpha }}} \right)d{x_k}dt.
\end{align}
By using the tool of Mellin transform and the representation of Fox's H-function, the outage probability of HARQ-IR-aided THz communications is derived in the following lemma.

\begin{lemma}\label{eqn:le_1}
$p_{out,K}$ can be expressed in the form of the inverse Laplace transform as (\ref{eqn:IR_9}), as shown at the top of the next page.
\begin{figure*}[!t]
\begin{equation}\label{eqn:IR_9}
{p_{out,K}} = {\left( {\frac{\phi }{{2\Gamma (\mu )}}} \right)^K}\frac{1}{{2\pi {\text{i}}}}\int_{ - {{\text{c}}_1} - {\text{i}}\infty }^{ - {{\text{c}}_1} + {\text{i}}\infty } {\underbrace {\frac{1}{t}\prod\limits_{k = 1}^K {\frac{{H_{3,2}^{1,3}\left[ {{{\left( {{\rho _k}{{\left| {{h_l}} \right|}^2}} \right)}^{\frac{1}{2}}}{{\left( {\frac{\mu }{{\hat h_f^\alpha S_{\text{0}}^\alpha }}} \right)}^{ - \frac{1}{\alpha }}}\left| {_{\left( {0,\frac{1}{2}} \right),\left( { - \phi ,1} \right)}^{\left( {1 - t,\frac{1}{2}} \right),\left( {1 - \mu ,\frac{1}{\alpha }} \right),\left( {1 + \phi ,1} \right)}} \right.} \right]}}{{\Gamma \left( t \right)}}} }_{F\left( t \right)}{e^{Rt\ln 2}}dt}
\end{equation}
\hrulefill
\end{figure*}
\end{lemma}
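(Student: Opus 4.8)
The plan is to derive \eqref{eqn:IR_9} from the double-integral representation \eqref{eqn:IR_4} by evaluating each one-dimensional integral over $x_k$ in closed form and repackaging the outcome as a Fox $H$-function. The first move is cosmetic: the substitution $t\mapsto-t$ turns the kernel $2^{-Rt}/(-t)$ into $e^{Rt\ln 2}/t$ and shifts the Bromwich contour to $\mathrm{Re}(t)=-{\rm c}_1>0$, so that \eqref{eqn:IR_9} reads as $\bigl(\phi/(2\Gamma(\mu))\bigr)^K$ times the inverse Laplace transform, evaluated at $R\ln 2$, of the function $F(t)$ named in \eqref{eqn:IR_9}. It then suffices to establish, for each $k$, the one-variable identity
\[
\frac{\phi\,\mu^{\phi/\alpha}}{S_0^\phi\hat h_f^\phi\Gamma(\mu)}\int_0^\infty\bigl(1+\rho_k|h_l|^2x^2\bigr)^{-t}x^{\phi-1}\Gamma\!\left(\tfrac{\alpha\mu-\phi}{\alpha},\tfrac{\mu x^\alpha}{S_0^\alpha\hat h_f^\alpha}\right)dx=\frac{\phi}{2\Gamma(\mu)}\,\frac{1}{\Gamma(t)}\,H_{3,2}^{1,3}\!\left[\,\cdot\,\right],
\]
with the argument and the upper/lower parameter lists exactly as displayed in \eqref{eqn:IR_9}.

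To prove this identity I would rewrite the two non-elementary factors as Mellin--Barnes integrals: the binomial through the pair $(1+z)^{-t}=\tfrac{1}{\Gamma(t)}\,\tfrac{1}{2\pi i}\int\Gamma(s)\Gamma(t-s)z^{-s}\,ds$, and the upper incomplete Gamma through its Meijer-$G$ (equivalently Fox-$H$) form $\Gamma(a,z)=G^{2,0}_{1,2}\!\bigl(z\,\big|\,{}^{1}_{a,0}\bigr)$, so that the $x$-integral becomes a Mellin convolution of two $H$-functions. After interchanging the order of the contour integral(s) with the integral over $x$ — legitimate by Fubini once the auxiliary contour and ${\rm c}_1$ are positioned so that all $\Gamma$-factors stay inside their analyticity strips and the repeated integral converges absolutely — and rescaling $x$ (first $x^2\mapsto y$, which is the source of the half-integer pairs $(0,\tfrac12)$ and $(1-t,\tfrac12)$, then $y^{\alpha/2}\mapsto v$, which supplies the $1/\alpha$-scaled entries), the $x$-integral collapses to the elementary Mellin integrals $\int_0^\infty y^{s-1}(1+y)^{-t}\,dy=\Gamma(s)\Gamma(t-s)/\Gamma(t)$ and $\int_0^\infty v^{p-1}\Gamma(a,v)\,dv=\Gamma(a+p)/p$. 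What is left is a single Mellin--Barnes integral whose integrand is a ratio of products of Gamma functions in the contour variable times a power of $(\rho_k|h_l|^2)^{1/2}(\mu/(\hat h_f^\alpha S_0^\alpha))^{-1/\alpha}$; in the process the normalising constant $\mu^{\phi/\alpha}/(S_0^\phi\hat h_f^\phi)$ of the PDF cancels and only the scalar $\phi/2$ survives. Comparing this Gamma-ratio with the definition of the Fox $H$-function identifies it as the $H_{3,2}^{1,3}$ written in \eqref{eqn:IR_9}, with the factor $1/\Gamma(t)$ inherited from the binomial representation.

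Reassembling — inserting the per-round factor into the transformed version of \eqref{eqn:IR_4}, forming the product over $k=1,\dots,K$, and pulling the $K$ copies of $\phi/(2\Gamma(\mu))$ out front — yields \eqref{eqn:IR_9}. I expect the most delicate step to be the bookkeeping of the Fox-$H$ parameters under the chained substitutions $x^2\mapsto y\mapsto v$: the $(\cdot,\tfrac12)$ and $(\cdot,\tfrac1\alpha)$ slots must be filled by precisely the right Gamma factors, and one has to keep straight which factors come from the binomial kernel, which from the Mellin transform of the incomplete Gamma, and which from the Jacobians. A close second is checking the contour geometry: that $-{\rm c}_1>0$ keeps the pole of $1/t$ to the left of the Bromwich path, that the auxiliary Mellin--Barnes contour separates the poles of $\Gamma(s)$ from those of $\Gamma(t-s)$, and that $\int_0^\infty v^{p-1}\Gamma(a,v)\,dv$ stays in its region of validity, so that every interchange is a genuine identity between absolutely convergent integrals and their analytic continuations.
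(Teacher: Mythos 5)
Your proposal is correct and takes essentially the same route as the paper: the paper applies Parseval's Mellin-transform property to the inner $x_k$-integral using exactly the two elementary Mellin transforms you invoke (the Beta-type integral for the binomial kernel, the paper's (\ref{eqn:IR_7}), and the Mellin transform of the upper incomplete Gamma, the paper's (\ref{eqn:IR_6})), after which the $b^{\pm\phi/\alpha}$ normalization cancels and the resulting Mellin--Barnes integral is identified with the $H^{1,3}_{3,2}$ function carrying the $1/\Gamma(\cdot)$ factor. The only cosmetic difference is that you perform the substitution $t\mapsto -t$ at the outset, whereas the paper first derives (\ref{eqn:IR_8}) and applies that change of variable at the end to reach (\ref{eqn:IR_9}).
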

\begin{proof}
Please see Appendix \ref{sec:le1}.
\end{proof}

Needless to say, it is intractable to derive a closed-form expression for \eqref{eqn:IR_9} that involves a product of multiple Fox's H-function. Hence, by using the method of Abate and Whitt, (\ref{eqn:IR_9}) can be numerically computed with a high accuracy as 
 \begin{align}\label{eqn:lemma_1}
{p_{out,K}} \approx& {\left( {\frac{\phi }{{2\Gamma (\mu )}}} \right)^K}\frac{{{2^{ - M}}{e^{\frac{A}{2}}}}}{{R\ln 2}}\sum\limits_{m = 0}^M {\left( {\begin{array}{*{20}{c}}
  M \\
  m
\end{array}} \right)}\notag\\
&\times \left( \begin{gathered}
  \frac{1}{2}\Re \left\{ {F\left( {\frac{A}{{R\ln 4}}} \right)} \right\} \hfill \\
   + \sum\limits_{n = 1}^{Q + m} {{{\left( { - 1} \right)}^n}} \Re \left\{ {F\left( {\frac{{A + 2n\pi \text{i}}}{{R\ln 4}}} \right)} \right\} \hfill \\
\end{gathered}  \right),
\end{align}
where $M$ is the number of Euler summation terms and $Q$ is the truncation order. According to \cite{9502509}, there exists an approximate error between (\ref{eqn:IR_9}) and (\ref{eqn:lemma_1}), which is composed of the discretization error and the truncation error. Herein, the discretization error is bounded by $\left| \varepsilon  \right| \leqslant {e^{ - A}}/\left( {1 - {e^{ - A}}} \right)$. For example, to ensure a discretization error up to ${10^{ - 8}}$, $A$ is set to $A \approx 18$. Furthermore, the truncation error is tunable by choosing appropriate $M$ and $Q$. For instance, $M=11$ and $Q = 15$ are suggested in \cite{abate1995numerical}. By comparing to the outage evaluation through Monte Carlo simulations, \eqref{eqn:lemma_1} is more effective when calculating a low outage probability owing to its low complexity and high accuracy. Although \eqref{eqn:lemma_1} can provide a high calculation accuracy for the outage probability, it is almost impossible to extract useful insights from this complex expression. In what follows, we turn to the asymptotic outage analysis for meaningful insights.

\subsection{Asymptotic Outage Analysis}
The complex form of \eqref{eqn:IR_9} impedes the extraction of useful physical insights into the effect of fading parameters. This motivates us to resort to the asymptotic outage analysis in the high SNR regime. To ease the analysis, we assume that ${\rho _1}/q_1={\rho _2}/q_2= \cdots ={\rho _M}/q_M = \hat \rho$ in the sequel, that is, ${\rho _k} = {q_k}\hat \rho $ for $k\in [1,K]$, where $q_k$ denotes the power allocation factor for the $k$-th HARQ round. Accordingly, the asymptotic expression of (\ref{eqn:IR_9}) amounts to determining the asymptotic expression of the Fox's $\text H$-function in the high SNR regime, i.e., $\hat \rho  \to \infty $. By capitalizing on the complex analysis, the asymptotic expression of the outage probability of HARQ-IR-aided THz communications is given by the following theorem.
\begin{theorem}\label{eqn:le_2}
In the high SNR regime, the outage probability ${p_{out,K}}$ of HARQ-IR-aided THz communications is asymptotic to
\begin{align}\label{eqn:IR_1000}
{p_{out,K}} \simeq  {\cal A}  {{\cal L}\left( P \right)} \left( \mathcal C(R)\right)^{-\mathcal D} \triangleq {p_{out,K}^{asy}}
\end{align}
where ${\cal A}$ is the impact factor that combines the effects from the pointing error and fading channels together, and is expressed as
\begin{align}\label{eqn:A}
{\cal A} = 
\left\{ \begin{array}{l}
{\left( {\frac{{{\mu ^{\frac{\phi }{\alpha }}}\Gamma (\mu  - \frac{\phi }{\alpha })}}{{{{\left| {{h_l}} \right|}^\phi }\hat h_f^\phi S_{\rm{0}}^\phi \Gamma (\mu )N_0^{\frac{\phi }{2}}}}} \right)^K},\mu \alpha  - \phi  > 0\\
{\left( {\frac{{\phi {\mu ^{\mu  - 1}}}}{{\left( {\phi  - \mu \alpha } \right){{\left| {{h_l}} \right|}^{\mu \alpha }}\hat h_f^{\mu \alpha }S_{\rm{0}}^{\mu \alpha }\Gamma (\mu )N_0^{\frac{{\mu \alpha }}{2}}}}} \right)^K},\mu \alpha  - \phi  < 0
\end{array} \right.,
\end{align}
the term ${{\cal L}\left( \bf P \right)}$ refers to the impact factor from the power allocation, which is obtained as
\begin{align}\label{eqn:P}
{\cal L}\left( \bf P \right) = \left\{ \begin{array}{l}
{\left( {\prod\limits_{k = 1}^K {{P_k}} } \right)^{ - \frac{\phi }{2}}},\mu \alpha  - \phi  > 0\\
{\left( {\prod\limits_{k = 1}^K {{P_k}} } \right)^{ - \frac{{\mu \alpha }}{2}}},\mu \alpha  - \phi  < 0
\end{array} \right.,
\end{align}
${\bf P} =(P_1,\cdots,P_K)$, $\mathcal C(R) = {{{\cal G}_K}\left( {{2^R}} \right)}^{-1/\mathcal D} $ stands for the modulation and coding gain, and ${{\cal G}_K}{\left( {{2^R}} \right)}$ is expressed as \eqref{eqn:g_k}, as shown at the top of this page. 
\begin{figure*}[!t]
\begin{align}\label{eqn:g_k}
{{\cal G}_K}{\left( {{2^R}} \right)} =
\left\{ \begin{array}{l}
\left(\Gamma \left( {\frac{\phi }{2}+1} \right)\right)^K  G_{K,K}^{0,K}{\left( {{2^R}\left| {_{0,1,...,1}^{1 + \frac{\phi }{2},1 + \frac{\phi }{2},...,1 + \frac{\phi }{2}}} \right.} \right)},\quad \mu \alpha  - \phi  > 0\\
\left(\Gamma \left( {\frac{{\mu \alpha }}{2}+1} \right)\right)^K G_{K,K}^{0,K}{\left( {{2^R}\left| {_{0,1,...,1}^{1 + \frac{{\mu \alpha }}{2},1 + \frac{{\mu \alpha }}{2},...,1 + \frac{{\mu \alpha }}{2}}} \right.} \right)},\quad \mu \alpha  - \phi  < 0
\end{array} \right.,
\end{align}
\hrulefill
\end{figure*}
${{\cal D}}$ denotes the diversity order and is given by
\begin{align}\label{eqn:d_I}
{{\cal D}} = \frac{K\min\{\phi,\alpha \mu\}}{2}.
\end{align}
\end{theorem}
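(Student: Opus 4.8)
The plan is to read the $\hat\rho\to\infty$ behaviour directly off the contour integral of Lemma~\ref{eqn:le_1}, so that everything reduces to the large-argument asymptotics of the $K$ Fox's $\mathrm H$-functions sitting inside $F(t)$. Each of them has argument $z_k=(\rho_k|h_l|^2)^{1/2}\big(\mu/(\hat h_f^\alpha S_0^\alpha)\big)^{-1/\alpha}$, and $z_k\to\infty$ since $\rho_k=q_k\hat\rho\to\infty$. Replacing $H^{1,3}_{3,2}$ by its defining Mellin--Barnes integral (equivalently, unfolding the inner integral of \eqref{eqn:IR_4} via the substitution $w=\mu x^\alpha/(S_0^\alpha\hat h_f^\alpha)$ together with the Mellin--Barnes forms of $(1+u)^t$ and $\Gamma(a,\cdot)$), I would obtain the large-$z_k$ behaviour by collecting the residues on the side of the contour where the kernel $z_k^{-s}$ decays, the leading term being the residue at the pole of smallest real part there.

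Two Gamma factors compete for that first pole: the one carrying the pointing-error parameter $\phi$ and the one carrying the fading parameters $(\mu,\alpha)$; the former contributes a pole at the $s$-value corresponding to the exponent $\phi$, the latter one corresponding to $\alpha\mu$, and the validity conditions of the representation keep the remaining, $t$-dependent poles further to the right. Hence the dominant residue is governed by $\min\{\phi,\alpha\mu\}$, which is precisely the origin of the case split $\mu\alpha-\phi\gtrless 0$ in \eqref{eqn:A}--\eqref{eqn:g_k}: when $\mu\alpha>\phi$ the simple pole tied to $\phi$ wins and each $\mathrm H$-function behaves like a constant times $\Gamma(t-\phi/2)\,(\rho_k|h_l|^2)^{-\phi/2}$, whereas when $\mu\alpha<\phi$ the pole tied to $\alpha\mu$ wins and one gets the analogue with $\phi/2$ replaced by $\alpha\mu/2$ and a different leading constant. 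I expect this step to be the main obstacle: besides locating and correctly weighting the first pole in both regimes (and picking up the exact constants, which is what will eventually feed $\mathcal A$), one must justify that this leading term, which is only an asymptotic statement for each fixed $t$, may be substituted inside the outer inverse-Laplace integral --- i.e.\ that $\lim_{\hat\rho\to\infty}$ commutes with $\tfrac{1}{2\pi\mathrm i}\int_{-c_1-\mathrm i\infty}^{-c_1+\mathrm i\infty}\!\cdots\,dt$ --- which needs a uniform-in-$t$ remainder bound / dominated-convergence estimate along the Bromwich contour.

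Granting the interchange, the rest is bookkeeping. Writing $\nu=\tfrac12\min\{\phi,\alpha\mu\}$, the product of the $K$ leading terms factors out $\prod_{k=1}^{K}(\rho_k|h_l|^2)^{-\nu}$; substituting $\rho_k=P_k/N_0$ turns this into $N_0^{K\nu}|h_l|^{-2K\nu}\big(\prod_kP_k\big)^{-\nu}$, which isolates $\mathcal L(\mathbf{P})$ exactly as in \eqref{eqn:P} and, since $\rho_k=q_k\hat\rho$, pins down the diversity order $\mathcal D=-\lim_{\hat\rho\to\infty}\log p_{out,K}/\log\hat\rho=K\nu=\tfrac{K\min\{\phi,\alpha\mu\}}{2}$, i.e.\ \eqref{eqn:d_I}. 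All that survives of the $t$-dependence is the single integral $\tfrac{1}{2\pi\mathrm i}\int \tfrac{1}{t}\big(\tfrac{\Gamma(t-\nu)}{\Gamma(t)}\big)^{K}(2^R)^{t}\,dt$; using $1/t=\Gamma(t)/\Gamma(t+1)$ and matching against the Mellin--Barnes definition of the Meijer $G$-function, this equals the $G_{K,K}^{0,K}(2^R\,|\,\cdots)$ of \eqref{eqn:g_k}, while the leftover Gamma constants carried down from the prefactors (for $\mu\alpha>\phi$ these are $\phi^K$, $2^{-K}$ and $\Gamma(\phi/2)^K$) coalesce into $(\Gamma(\nu+1))^{K}$; together they produce $\mathcal G_K(2^R)$, hence the modulation-and-coding gain $\mathcal C(R)=\mathcal G_K(2^R)^{-1/\mathcal D}$ with $(\mathcal C(R))^{-\mathcal D}=\mathcal G_K(2^R)$. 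Absorbing the remaining $R$- and $\mathbf{P}$-independent constants into $\mathcal A$ gives \eqref{eqn:A} and the claimed $p_{out,K}\simeq\mathcal A\,\mathcal L(\mathbf{P})\,(\mathcal C(R))^{-\mathcal D}$.

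As a cross-check --- and as an $\mathrm H$-function-free route to the same identities --- one may instead read off from \eqref{eqn:PDF} that $\Pr\{|h_{pf,k}|^2<\delta\}\simeq c\,\delta^{\nu}$ as $\delta\to0$, with $c$ coming from $\Gamma(a,0)=\Gamma(a)$ when $\alpha\mu>\phi$ and from the small-argument expansion of $\Gamma(a,\cdot)$ when $\alpha\mu<\phi$, and then evaluate $\Pr\{\prod_k(1+\rho_k|h_l|^2|h_{pf,k}|^2)<2^R\}$ by a Mellin transform of the product of the independent variables $1+\rho_k|h_l|^2|h_{pf,k}|^2$; the same Meijer $G$-function reappears, and for $K=1$ it collapses to $\mathcal G_1(2^R)=(2^R-1)^{\phi/2}$, consistent with $p_{out,1}=\Pr\{|h_{pf,1}|^2<(2^R-1)/(\rho_1|h_l|^2)\}$.
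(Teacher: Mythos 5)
Your proposal follows essentially the same route as the paper's proof in Appendix B: expand each Fox's H-function inside the contour integral of Lemma \ref{eqn:le_1} via its Mellin--Barnes representation, pick up the lowest-order poles at $s=\phi$ and $s=\mu\alpha$ (with the $t$-dependent poles pushed further right), select the dominant term according to the sign of $\mu\alpha-\phi$, and identify the surviving Bromwich integral in $t$ with the Meijer $G$-function $G^{0,K}_{K,K}(2^R|\cdots)$, the constants assembling into $\mathcal A$, $\mathcal L(\mathbf P)$, and $\Gamma(\nu+1)^K$ exactly as in \eqref{eqn:A}--\eqref{eqn:g_k}. The only cosmetic difference is that you retain just the dominant pole from the outset (and explicitly flag the limit--integral interchange, which the paper handles implicitly via the dominant-term approximation under $t<-\max\{\phi,\mu\alpha\}/2$), whereas the paper keeps both residues, expands the $K$-fold product over $\mathbf a\in\boldsymbol\Omega$, and prunes to the dominant power of $\hat\rho$ afterwards.
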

\begin{proof}
Please see Appendix \ref{sec:le2}.
\end{proof}

To explore more insights, the above relevant terms are carefully scrutinized one by one.

\subsubsection{Impact Factor of Pointing Error and Fading Channels ${\cal A}$}
The factor ${\cal A}$ comprises the impacts of the pointing error and the fading channels. Clearly from \eqref{eqn:A}, the value of the impact factor ${\cal A}$ depends on the sign of ${\mu \alpha  - \phi }$. More specifically, if ${\mu \alpha  - \phi } > 0$, the value of \eqref{eqn:A} is mainly affected by the pointing error. Otherwise, the fading channels dominate the value of the impact factor ${\cal A}$.



\subsubsection{Impact Factor of Power Allocation ${{\cal L}\left( \bf P \right)}$}
The impact factor ${{\cal L}\left( \bf P \right)}$ quantifies the effect of the power allocation scheme upon the outage performance. Similarly, it is found that the sign of ${\mu \alpha  - \phi }$ also determines the value of ${{\cal L}\left( \bf P \right)}$. Nonetheless, no matter whether ${\mu \alpha  - \phi }$ is less than or larger than zero, the asymptotic outage probability is always an inverse function of the product of the transmission powers in all HARQ rounds. This special feature may facilitate the optimal power allocation of HARQ-IR-aided THz communications with some off-the-shelf optimization tools, such as geometric programming.



\subsubsection{Modulation and Coding Gain ${{\cal C}}(R)$}\label{sec:cr}
The modulation and coding gain ${{\cal C}}(R)$ quantifies the amount of the SNR that can be reduced by the modulation and coding scheme (MCS) to achieve a certain outage target. Clearly from \eqref{eqn:g_k}, the explicit expression of ${{\cal C}}(R)$ depends on the sign of ${\mu \alpha  - \phi }$. Different from \cite{7959548} that the fading channels dominate the modulation and coding gain ${{\cal C}}(R)$, both the fading channels and the pointing error affect ${{\cal C}}(R)$ for HARQ-IR aided THz communications. Although ${{\cal C}}(R) = {{{\cal G}_K}\left( {{2^R}} \right)}^{-1/\mathcal D} $ is a complex function of the transmission rate $R$, it is proved in \cite{7959548} that ${{{\cal G}_K}\left( {{2^R}} \right)}$ is an increasing and convex function of $R$.

\subsubsection{Diversity Order $\mathcal D$}
The diversity order is a fundamental asymptotic reliability metric to characterize the degree of freedom for a communication system. More precisely, the diversity order $\mathcal D$ is defined as the slope of the outage probability against the transmit SNR on a log-log scale as \cite{shi2019achievable}
\begin{equation}\label{eqn:div}
{{\cal D}} =  - \mathop {\lim }\limits_{{\hat \rho}  \to \infty } \frac{{\log  {{p_{out,K}}}}}{{\log {\hat \rho }  }}.
\end{equation}

Clearly, the diversity order is linearly proportional to the maximum number of transmissions, i.e, $K$. This means that the full time diversity is achievable by the HARQ-IR-aided scheme. Moreover, by comparing \eqref{eqn:d_I} to the results in \cite{song2021outage}, the same diversity order can be attained as the Type-I HARQ and the HARQ-CC-aided schemes. However, it is worth highlighting that the HARQ-IR-aided scheme performs the best among them owing to its highest coding and modulation gain ${{\cal C}}(R)$, which will be verified in the simulation section. Furthermore, the diversity order is also constrained by the severity of both the pointing errors and $\alpha$-$\mu$ fading. The asymptotic performance metrics of three types of HARQ-aided THz communication schemes are tabulated in Table \ref{tab:sum}, as shown at the top of the next page, wherein the asymptotic results of Type-I HARQ and HARQ-CC-aided THz communications can be found in \cite{song2021outage}.

\begin{figure*}[!t]
\centering
\captionsetup{type=table} 
\caption{The asymptotic performance metrics of HARQ-aided THz communication schemes.}
\label{tab:sum}%
\begin{tabular}{|c||c|c|c|c|}
 \hline
 \diagbox[width=5.5cm]{HARQ-Aided Schemes}{Metrics}& $\mathcal A$ & $\mathcal L({\bf P})$ & $\mathcal C(R)$  & $\mathcal D$ \\
 \hline
 \hline
 Type-I HARQ  &  \multirow{3}{*}{\eqref{eqn:A}}  & \multirow{3}{*}{\eqref{eqn:P}} & $\frac{1}{2^R-1}$ & \multirow{3}{*}{\eqref{eqn:d_I}}\\ \cline{1-1} \cline{4-4}

 HARQ-CC  &    &  & ${\left( {\Gamma \left( {\frac{\mathcal D}{K} + 1} \right)} \right)^{ - {K}/{\mathcal D}}}{\left( {\Gamma \left( {\mathcal D + 1} \right)} \right)^{{1}/{\mathcal D}}}\frac{1}{{{2^R} - 1}}$ & \\ \cline{1-1} \cline{4-4}

 HARQ-IR  &    &  & ${{{\cal G}_K}\left( {{2^R}} \right)}^{-1/\mathcal D}$ & \\ \cline{1-1} \cline{4-4}
 \hline
\end{tabular}
\end{figure*}


\subsection{Long Term Average Throughput}
In this paper, the long term average throughput (LTAT) is used to evaluate the throughput performance of the HARQ-IR-aided THz communication systems. More specifically, LTAT is defined as the ratio of the total number of successfully received bits to the consumed time, that is, \cite{6879267}
\begin{equation}\label{eqn:t_def}
 \bar {\mathcal T} = \mathop {\lim }\limits_{t \to \infty } \frac{{b_t^{suc}}}{t},
\end{equation}
where $b_t^{suc}$ denotes the total number of successfully received information bits till time $t$. By invoking the renewal reward theory, the LTAT of HARQ-aided schemes can be obtained as
\begin{align}\label{eqn:LTAT_10}
{\bar {\cal T}} = \frac{{R\left( {1 - {p_{out,K}}} \right)}}{{{\bar {\cal N}}}},
\end{align}
where ${\bar {\cal N}}$ is the average number of transmissions and it is given by
\begin{equation}\label{eqn:avg_num}
{\bar {\cal N}} = 1 + \sum\nolimits_{k = 1}^{K - 1} {p_{out,k}}.
\end{equation}

\section{HARQ-IR-Aided THz With Blockage}\label{sec:Rm}
Due to the ultra-high frequency of THz communications, the THz communication link is easily blocked by surrounding obstacles. In this section, the blockage effect is considered into the performance analysis of HARQ-IR-aided THz communications. Moreover, the relaying scheme is adopted to assist the HARQ-IR-aided THz communications to avoid the communication interruption caused by the blockage effect.

\subsection{Outage Analysis}
In what follows, we first consider a single-hop HARQ-IR-aided THz communication system. Then multiple relays are deployed to assist HARQ-IR-aided THz communications.
\subsubsection{Single-Hop}
In the single-hop HARQ-IR-aided THz communication system, the outage event occurs in the following two cases. For the first one, the accumulated mutual information is below the preset transmission rate if there is no blocking. For the second one, the THz communication is interrupted by the blockage. 
Therefore, by using the law of total probability, the outage probability of single-hop HARQ-IR-aided THz communication system after $K$ HARQ rounds is given by
\begin{equation}\label{eqn:nbloutK}
p_{out,K}^{\left( 1 \right)} = {P_B} + {P_N}{p_{out,K}},
\end{equation}
where ${P_B}$ and ${P_N}$ denote the blocking and non-blocking probability as given by \eqref{eqn:nbl}. 

\subsubsection{Multi-Hop}
\begin{figure}
        \centering
        \includegraphics[width=3in,trim=1 1 1 1,clip]{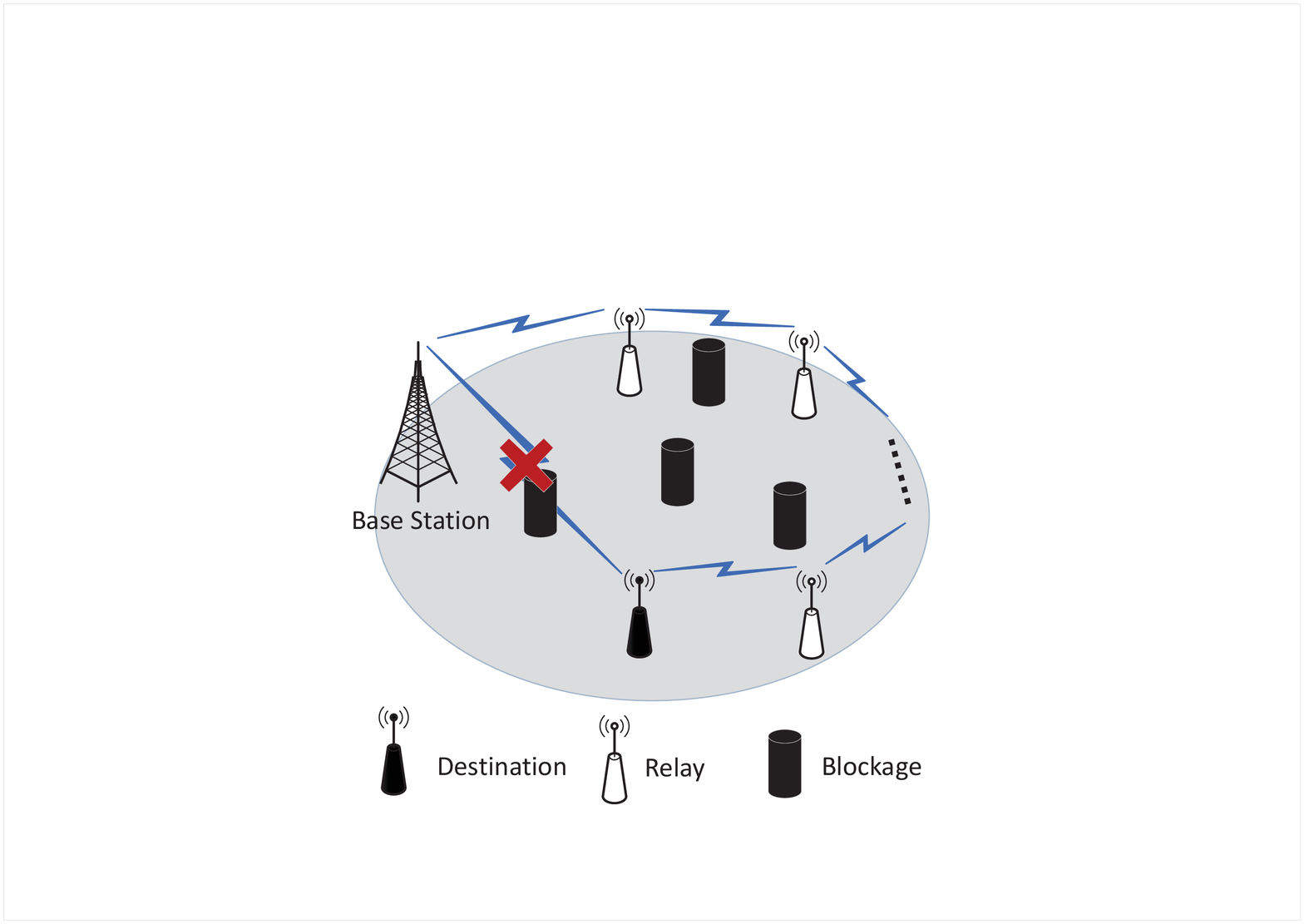}
        \caption{A multi-hop THz communication networks with blockage}\label{fig:R10}
\end{figure}
In this subsection, a multi-hop relaying scheme is proposed to overcome harsh propagation environments of THz communications, as shown in Fig. \ref{fig:R10}. 
In Fig. \ref{fig:R10}, we assume that the total number of hops is $L$ and the distance of the $l$-th hop is defined as ${d_l}$, where $1 \leqslant l \leqslant L$. According to \eqref{eqn:nbl}, the blocking effect in multi-hop THz communications can be mitigated because of $d_l < d$, where $d$ is the distance between the source and the destination. 
Moreover, the maximum number of transmissions in the multi-hop HARQ-IR-aided THz system is also assumed to be $K$.
The outage event of multi-hop relaying scheme takes place if any hops fail to decode the message within a total number of $K$ transmissions. By considering the complementary event of the outage, the outage probability of the multi-hop HARQ-IR-aided THz communications can be obtained as
\begin{align}\label{eqn:MH_1}
p_{out,K}^{\left( L \right)} = 1 - \sum\limits_{k = L}^K {p_{suc,k}^{\left( L \right)}},
\end{align}
where $p_{suc,k}^{\left( L \right)}$ denotes the probability of the event that the receiver successfully received the message at the $k$-th transmission. It is noteworthy that the subscript $k$ should start from $L$. This is due to the fact that each hop needs at least one transmission for the successful delivery of the message. To simplify \eqref{eqn:MH_1}, we assume that the number of transmissions for the successful reception at the $k$-th hop is denoted by ${\kappa_l}$. 
Clearly, the sum of the numbers of transmissions at all hops should not exceed the maximum number of transmissions, i.e. $\sum\nolimits_{l = 1}^L {{\kappa_l} \le K} $, and ${\kappa_l} \ge 1$. Moreover, the successful decoding event of the multi-hop HARQ-IR-aided THz system occurs if and only if the decoding of each hop is successful. Accordingly, $p_{suc,k}^{\left( L \right)}$ can be derived as
\begin{equation}\label{eqn:psuc_def}
p_{suc,k}^{\left( L \right)} = \sum\limits_{\sum\nolimits_{l = 1}^L {{\kappa _l}} = k ,{\kappa _1}, \cdots ,{\kappa _L} \ge 1} {\prod\limits_{l = 1}^L {P_N^l{p^l_{suc,{\kappa _l}}}} },
\end{equation}
where ${{p^l_{suc,{\kappa _l}}}}$ represents the successful probability in the ${{\kappa _l}}$-th HARQ round at the $l$-th hop in the absence of blockage and ${P_N^l}$ denotes the non-blocking probability at the $l$-th hop. By substituting \eqref{eqn:psuc_def} into \eqref{eqn:MH_1}, we have
%
\begin{align}\label{eqn:MH_2}
p_{out,K}^{\left( L \right)} = 1 - \sum\limits_{L \le \sum\nolimits_{l = 1}^L {{\kappa _l}} \le K,{\kappa _1}, \cdots ,{\kappa _L} \ge 1} {\prod\limits_{l = 1}^L {P_N^l{p^l_{suc,{\kappa _l}}}} }.
\end{align}
Furthermore, with regard to the successful probability ${{p^l_{suc,{\kappa _l}}}}$, the successful event at the ${\kappa _l}$-th transmission implies that the $l$-th hop fails to decode the message prior to the ${\kappa _l}$-th transmission. Therefore, the successful probability ${p^l_{suc,{\kappa _l}}} $ can be expressed as ${p^l_{suc,{\kappa _l}}} = {p^l_{out,{\kappa _l} - 1}} - {p^l_{out,{\kappa _l}}}$, where ${p^l_{out,{\kappa _l}}}$ refers to the outage probability after ${\kappa _l}$ transmissions at the $l$-th hop in the absence of blockage, and we stipulate ${p^l_{out,0}}=1$. It is worthwhile to note that the outage probability ${p^l_{out,{\kappa _l}}}$ can be calculated by using the results in Section \ref{sec:opa}. With this equation, the outage probability of the multi-hop HARQ-IR-aided THz communications can be finally rewritten as
\begin{multline}\label{eqn:nbll}
p_{out,K}^{\left( L \right)} = 1 - \\
\sum\limits_{L \le \sum\nolimits_{l = 1}^L {{\kappa _l}} \le K ,{\kappa _1}, \cdots ,{\kappa _L} \ge 1} {\prod\limits_{l = 1}^L {P_N^l\left( {p_{out,{\kappa _l} - 1}^l - p_{out,{\kappa _l}}^l} \right)} }.
\end{multline}

On the basis of \eqref{eqn:nbll}, the asymptotic outage analysis of multi-hop HARQ-IR-aided THz communications can be conducted to gain useful insights, as summarized in the following remark.
\begin{remark}\label{eqn:mh_do}
In the high SNR regime, i.e., $\hat \rho \to \infty$, there exists a lower bound of the outage probability if $P_B>0$, i.e.,
\begin{align}\label{mh_do2}
p_{out,K}^{\left( L \right)} &\simeq 1 - {P_N}^L,\,K\ge L.
\end{align}
Whereas, if there is no blockage, i.e., $P_B=0$, the outage probability converges to zero as $\hat \rho \to \infty$, and the diversity order of multi-hop HARQ-IR-aided THz communications without blockage is given by
\begin{equation}\label{mh_do3}
{{\cal D}} = \frac{{\left( {K - L + 1} \right)\min \{ \phi ,\alpha \mu \} }}{2},\,K\ge L.
\end{equation}
Clearly, \eqref{mh_do3} includes the diversity order of the single-hop case in \eqref{eqn:d_I} as a special case. Moreover, the number of hops, i.e., $L$, decreases the diversity order, consequently degrades the outage performance. Full time diversity is unreachable if $L>1$. Hence, it does not mean that more hops will be in favor of the reliable transmissions.
\end{remark}
\begin{proof}
Please see Appendix \ref{sec:le3}.
\end{proof}


\subsection{LTAT}
In the following, the LTAT of HARQ-IR-aided THz systems with the blockage is investigated by considering two cases, i.e., single-hop and multi-hop.
\subsubsection{LTAT of Single-Hop}
Similarly to \eqref{eqn:LTAT_10}, the LTAT of the single-hop HARQ-IR-aided THz system with blockage is given by
\begin{align}\label{eqn:LTAT_1}
{{{\bar {\cal T}}}^{\left( 1 \right)}} = \frac{{R\left( {1 - p_{out,K}^{\left( 1 \right)}} \right)}}{{{{{\bar {\cal N}}}^{\left( 1 \right)}}}},
\end{align}
where ${{{\bar {\cal N}}}^{\left( 1 \right)}} = 1 + \sum\nolimits_{k = 1}^{K - 1} {p_{out,k}^{\left( 1 \right)}}  $ corresponds to the average number of transmissions.

\subsubsection{LTAT of Multi-Hop}
Similarly, the LTAT of the multi-hop HARQ-IR-aided THz system with the blockage is also expressed as
\begin{align}\label{eqn:LTAT_2}
{{{\bar {\cal T}}}^{\left( L \right)}} = \frac{{R\left( {1 - p_{out,K}^{\left( L \right)}} \right)}}{{{{{\bar {\cal N}}}^{\left( L \right)}}}},
\end{align}
where ${{{\bar {\cal N}}}^{\left( L \right)}}$ represents the average number of transmissions in the case of $L$ hops. By considering the distribution of the number of transmissions, the expectation of the number of transmissions can be obtained as
\begin{align}\label{eqn:LTAT_3}
{{{\bar {\cal N}}}^{\left( L \right)}} = Kp_{out,K - 1}^{\left( L \right)} + \sum\limits_{k = L}^{K - 1} {kp_{suc,k}^{\left( L \right)}},
\end{align}
where the first term indicates the maximum number $K$ of transmissions and the subscript $k$ starts from $L$ because at least $L$ transmissions are required for a successful delivery. Moreover, the successful probability can be represented by the difference of the probabilities of two successive outage events, i.e., $p_{suc,k}^{\left( L \right)} = {p_{out,k - 1}^{\left( L \right)} - p_{out,k}^{\left( L \right)}}$.
Accordingly, \eqref{eqn:LTAT_3} can be rewritten as
\begin{align}\label{eqn:LTAT_4}
{{{\bar {\cal N}}}^{\left( L \right)}} = Kp_{out,K - 1}^{\left( L \right)} + \sum\limits_{k = L}^{K - 1} {k\left( {p_{out,k - 1}^{\left( L\right)} - p_{out,k}^{\left( L \right)}} \right)}.
\end{align}
After some algebraic manipulations, the average number of transmissions in the case of multi-hop can be eventually obtained as
\begin{align}\label{eqn:LTAT_5}
{{{\bar {\cal N}}}^{\left( L \right)}} = L + \sum\limits_{k = L}^{K - 1} {p_{out,k}^{\left( L \right)}}.
\end{align}

Interestingly, a direct consequence of Remark \ref{eqn:mh_do} yields the asymptotic expression of the LTAT, as concluded in the following remark.
\begin{remark}
In the high SNR regime, i.e., $\hat \rho \to \infty$, there exists an upper bound of the LTAT if $P_B>0$, i.e.,
\begin{align}\label{mh_do10}
{{{\bar {\cal T}}}^{\left( L \right)}} &\simeq \frac{R}{{K\left( {{P_N}^{ - L} - 1} \right) + L}}.
\end{align}
Whereas, if there is no blockage, i.e., $P_B=0$, the LTAT converges to $R/L$ as $\hat \rho \to \infty$. Apparently, more hops degrade the asymptotic LTAT no matter $P_B>0$ or not.
\end{remark}

\section{Numerical Results}\label{sec:NR}
In this section, the analytical results are verified by conducting Monte Carlo simulations. For illustration, the system parameters are set as follows, Rayleigh fading channels are assumed such that $\alpha  = 2$ and $\mu  = 1$, and standard environmental conditions $\psi  = 50\% $, $T = 296{\rm{^\circ }}$~K, and $p = 101325$~Pa are considered. In addition, we assume that $f=275$~GHz, $K=3$, $R = 2$~bps/Hz, and ${\sigma _{\rm{s}}} = 1$ unless otherwise specified. Moreover, according to \cite{Guan2017}, both the transmit and the receive antenna gains are set to $55$~dBi, meanwhile the total transmission distance is assumed to be $20$~m. Besides, the number of Monte Carlo simulation runs is set up to ${10^{7}}$. By considering that the sign of $\alpha \mu  - \phi$ has different impacts on the outage performance, we consider the following two cases: 1) the equivalent beamwidth ${w_{{d_0}}}$ is set as $1$ for $\alpha \mu  - \phi  > 0$; 2) ${w_{{d_0}}}$ is set as $3$ for $\alpha \mu  - \phi  < 0$. By employing fixed power allocation scheme, the power allocation coefficients are set as $q_1=\cdots=q_K=1$. Furthermore, the curves of the simulated, the analytical, and the asymptotic results in the following figures are labeled as ``Sim.'', ``Ana.'', and ``Asy.'', respectively, where the analytical results are obtained by adopting the method of Abate and Whitt as given in \eqref{eqn:IR_101}. For comparison, the results for Type-I HARQ-aided and HARQ-CC-aided schemes in \cite{song2021outage} are incorporated for benchmarking purpose.

\subsection{Without Blockage}
Fig. \ref{fig:R1} depicts the outage probability ${p_{out,K}}$ versus the average transmit SNR $\hat \rho$ for different HARQ-aided THz communication schemes in the absence of blockage. It is clearly seen that the analytical and simulated results are in perfect agreement. 
One can also observe that the asymptotic curves tightly approach to the analytical curves with the increase of SNR no matter whether $\alpha \mu  - \phi  > 0$ or not, which validates the asymptotic analysis. Nonetheless, it is noteworthy that there exists a large gap between the analytical and asymptotic results for evaluating medium-to-high outage probabilities. Furthermore, it is found that the slopes of the outage curves of the three types of HARQ-aided schemes are identical. This justifies the validity of the diversity order analysis. In particular, the diversity order becomes independent of the fading parameters if ${\mu \alpha  - \phi < 0 }$. Otherwise, the diversity order of the system is determined by the pointing error. From Fig. \ref{fig:R1}, it is revealed that the HARQ-IR-aided scheme outperforms the other two HARQ-aided schemes, which corroborates the superiority of the HARQ-IR-aided scheme. This is because that the HARQ-IR aided scheme can achieve the highest modulation and coding gain by comparing to the other two HARQ-aided schemes, as shown in Table \ref{tab:sum}.


Fig. \ref{fig:R2} depicts the LTAT ${\bar {\cal T}}$ versus the average transmit SNR $\hat \rho$ for different HARQ-aided THz communication schemes in the absence of the blockage. Clearly, there is a perfect agreement between the simulation results and the analytical results no matter for $\alpha \mu  - \phi  > 0$ or not, which justifies the validity of our analysis. Not surprisingly, the LTAT can be improved by increasing the average SNR and is upper bounded by the transmission rate $R=2$~bps/Hz. Moreover, it can be seen that the HARQ-IR-aided scheme performs the best among the three HARQ-aided schemes especially for the low-to-medium SNR. This is due to the highest spectral efficiency of the HARQ-IR scheme, which attributes to its highest coding complexity.
\begin{figure}
        \centering
        \includegraphics[width=3.5in]{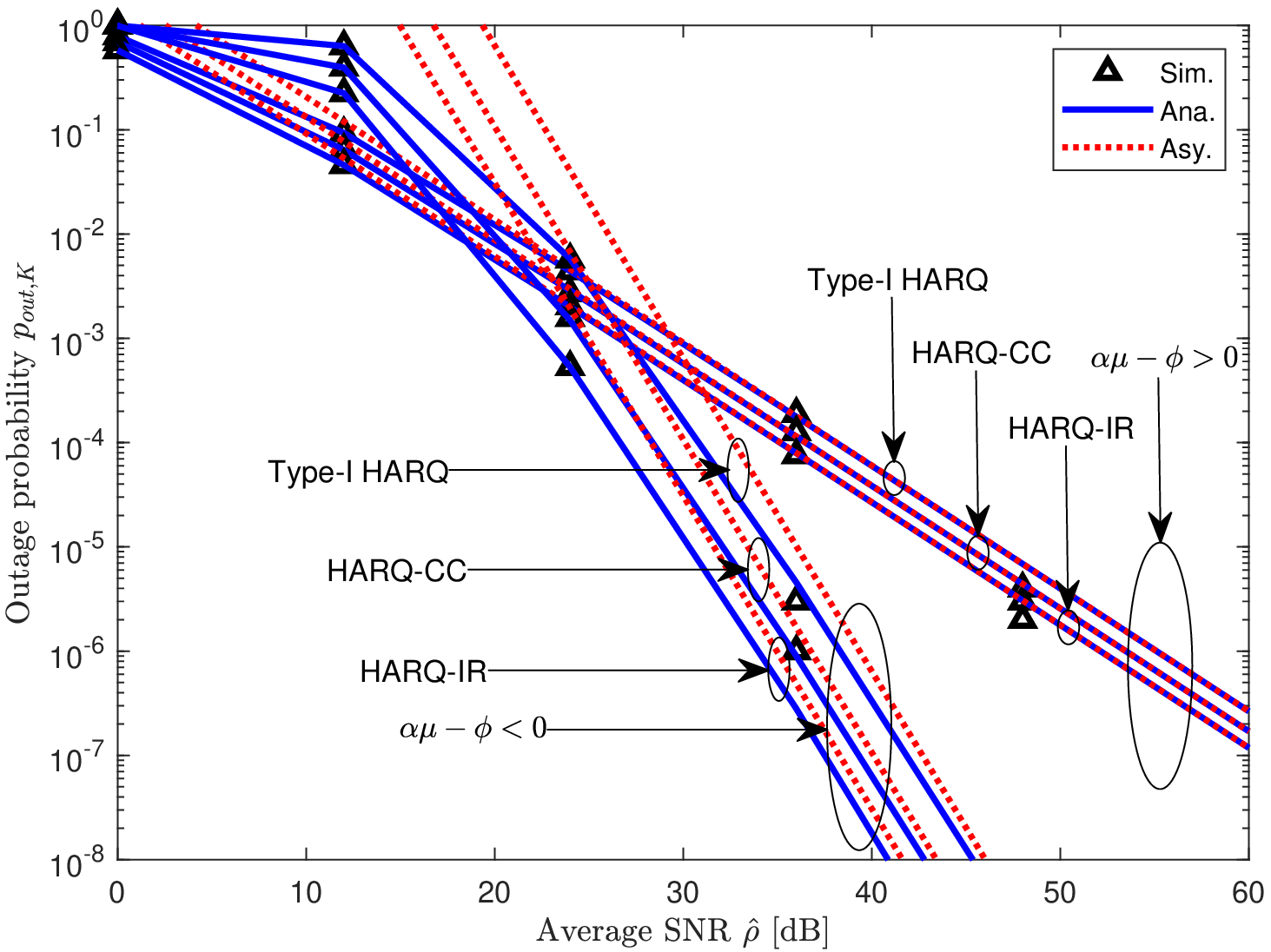}
        \caption{The outage probability $p_{out,K}$ versus the average SNR $\hat \rho $.}\label{fig:R1}
\end{figure}
\begin{figure}
        \centering
        \includegraphics[width=3.5in]{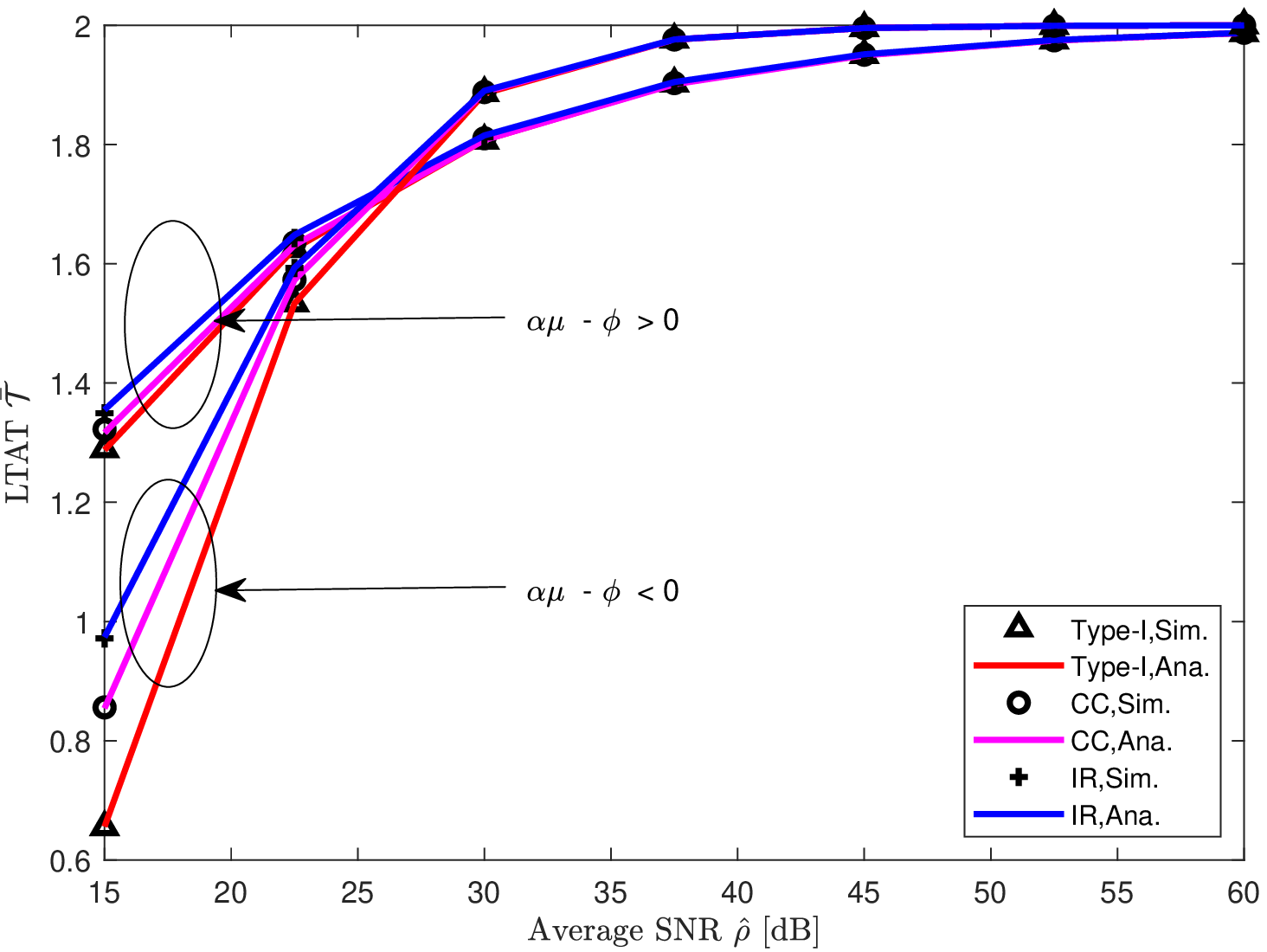}
        \caption{\textcolor[rgb]{0.00,0.00,0.00}{The LTAT ${\bar {\cal T}}$ versus the average SNR $\hat \rho $ }.}\label{fig:R2}
\end{figure}

Figs. \ref{fig:new1} and \ref{fig:R12d} are plotted for different carrier frequencies ($f=275,~350,~400$ GHz) and transmission distances ($d=10,~20, ~30$ m), respectively. It can be observed that the increase of the carrier frequency has a negative impact on the outage performance. This coincides with our intuition that a high carrier frequency usually yields a large path loss. Moreover, it is not beyond our expectation that the increase of the transmission distance deteriorates the outage performance. More importantly, it is shown that different carrier frequencies and transmission distances only affect the modulation and coding gain without altering the diversity order.
\begin{figure}
        \centering
        \includegraphics[width=3.5in]{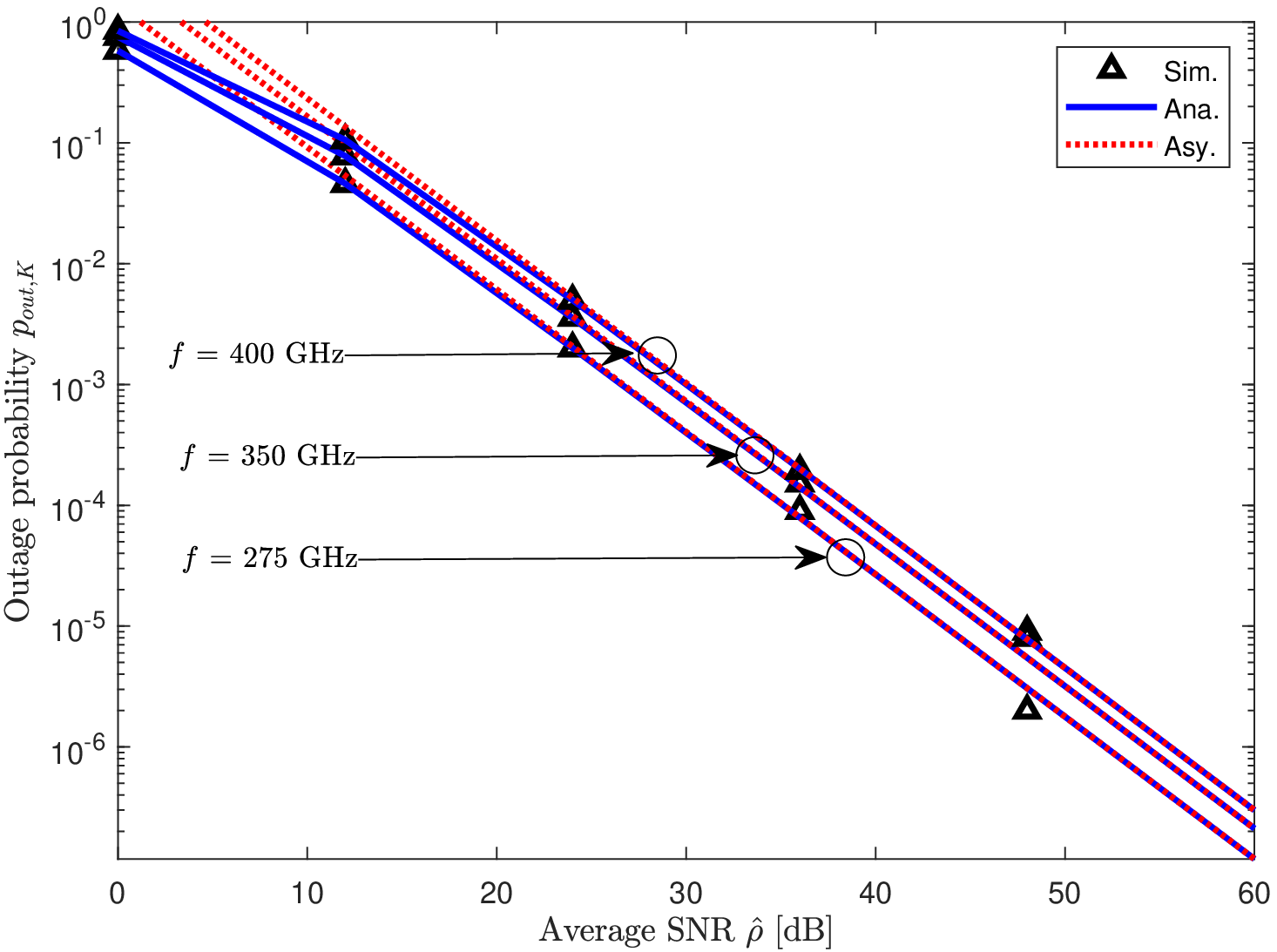}\\
        \caption{The outage probability $p_{out,K}$ versus the average SNR $\hat \rho $ for different carrier frequencies.}\label{fig:new1}
\end{figure}

\begin{figure}
        \centering
        \includegraphics[width=3.5in]{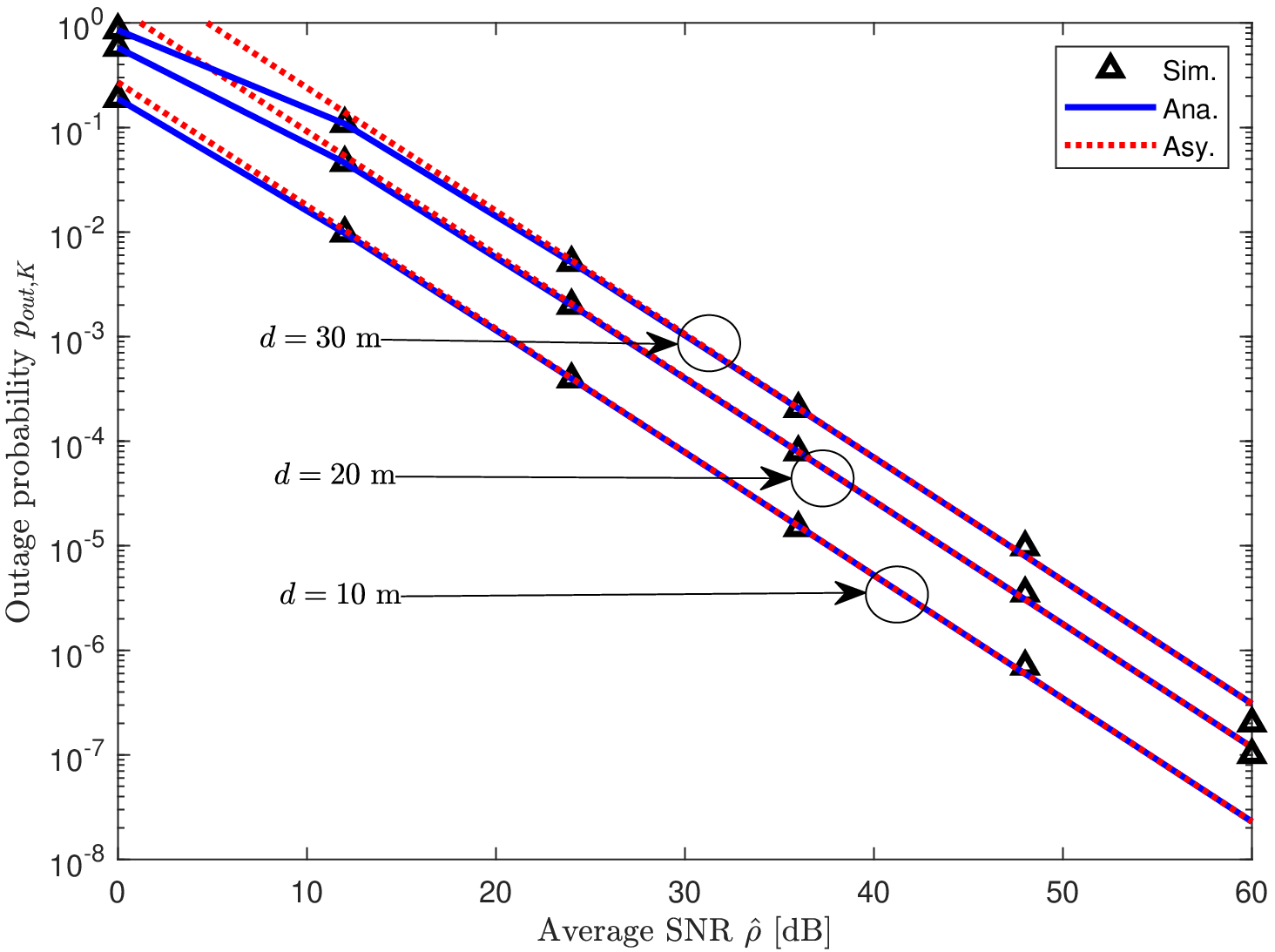}\\
        \caption{The outage probability $p_{out,K}$ versus the average SNR $\hat \rho $ for different transmission distances.}\label{fig:R12d}
\end{figure}


In Fig. \ref{fig:R2fso}, the outage performance of HARQ-IR aided THz communications is compared with that of HARQ-IR aided free space optical (FSO) communications, wherein the outage expression of HARQ-IR aided FSO communications was derived in \cite{8450318,jurado2011unifying,ansari2015performance}. For illustration, no independent scattering component and intensity modulation/direct detection (IM/DD) technique are assumed for FSO systems. It can be seen from Fig. \ref{fig:R2fso} that the HARQ-IR aided THz communications achieve a better outage performance than HARQ-IR aided FSO communications. This can be explained by the conclusion drawn in \cite{VarotsosJun2022} that the FSO system frequently suffers from severe atmospheric turbulence and pointing errors compared to the THz system. Hence, our results coincide with the experimental observations in \cite{VarotsosJun2022}. 
\begin{figure}
    \centering
    \includegraphics[width=3.5in]{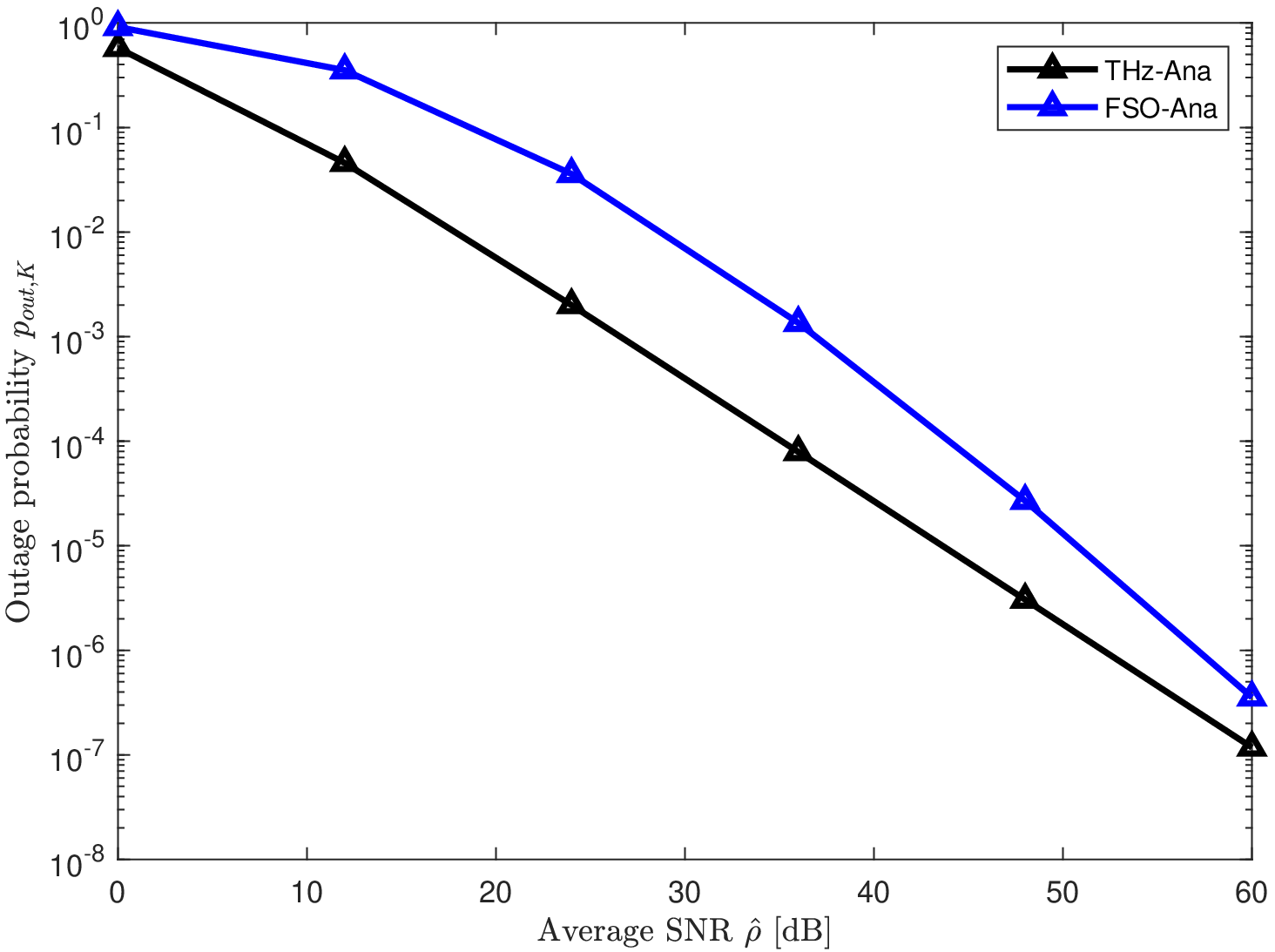}\\
    \caption{Comparison between the outage probabilities of HARQ-IR aided THz systems and FSO systems with $R=2$~bps/Hz and $K=3$.}\label{fig:R2fso}
\end{figure}

\subsection{With Blockage}
Figs. \ref{fig:R3} and \ref{fig:R4} show the outage probabilities ${p^{(L)}_{out,K}}$ versus the average transmit SNR $\hat \rho$ for single-hop and multi-hop HARQ-aided THz communication schemes in the presence of blockage, respectively, where the values of ${\lambda _b}$ are set as $0.03$ and $0.01$ for the single-hop and the multi-hop systems, respectively, $L$ is set as $2$. 
Clearly, the analytical and simulated results are in perfect agreement, which demonstrates the correctness of our theoretical analysis. Similarly, the HARQ-IR-aided scheme performs the best no matter under single-hop or multi-hop. In addition, it is not difficult to find that the outage probabilities of both single-hop and multi-hop systems have a lower bound. This is because the outage probability under high SNR is only related to the blocking probability, which coincides with Remark \ref{eqn:mh_do}. For instance, the lower bound of ${p^{(2)}_{out,K}}$ in Fig. \ref{fig:R4} is ${p^{(2)}_{out,K}} \ge 1-{P_N}^L = 1-e^{-2\beta d} = 0.007$. Hence, the outage probability eventually tends to the blocking probability according to \eqref{eqn:nbl}. Particularly, by comparing Figs. \ref{fig:R3} and \ref{fig:R4}, the outage probability of multi-hop systems is lower than that of single-hop systems, which corroborates the superiority of the multi-hop system. In addition, to verify the results in the case of $P_B = 0$ in Remark \ref{eqn:mh_do}, Fig. \ref{fig:Rt} is plotted to investigate the asymptotic outage behavior of multi-hop THz systems without blockage by setting ${\lambda _b}=0$. It is easily found that the outage probability decreases to zero as the SNR increases. The declining rate of the outage curves corroborates the validity of the diversity order analysis.


\begin{figure}
        \centering
        \includegraphics[width=3.5in]{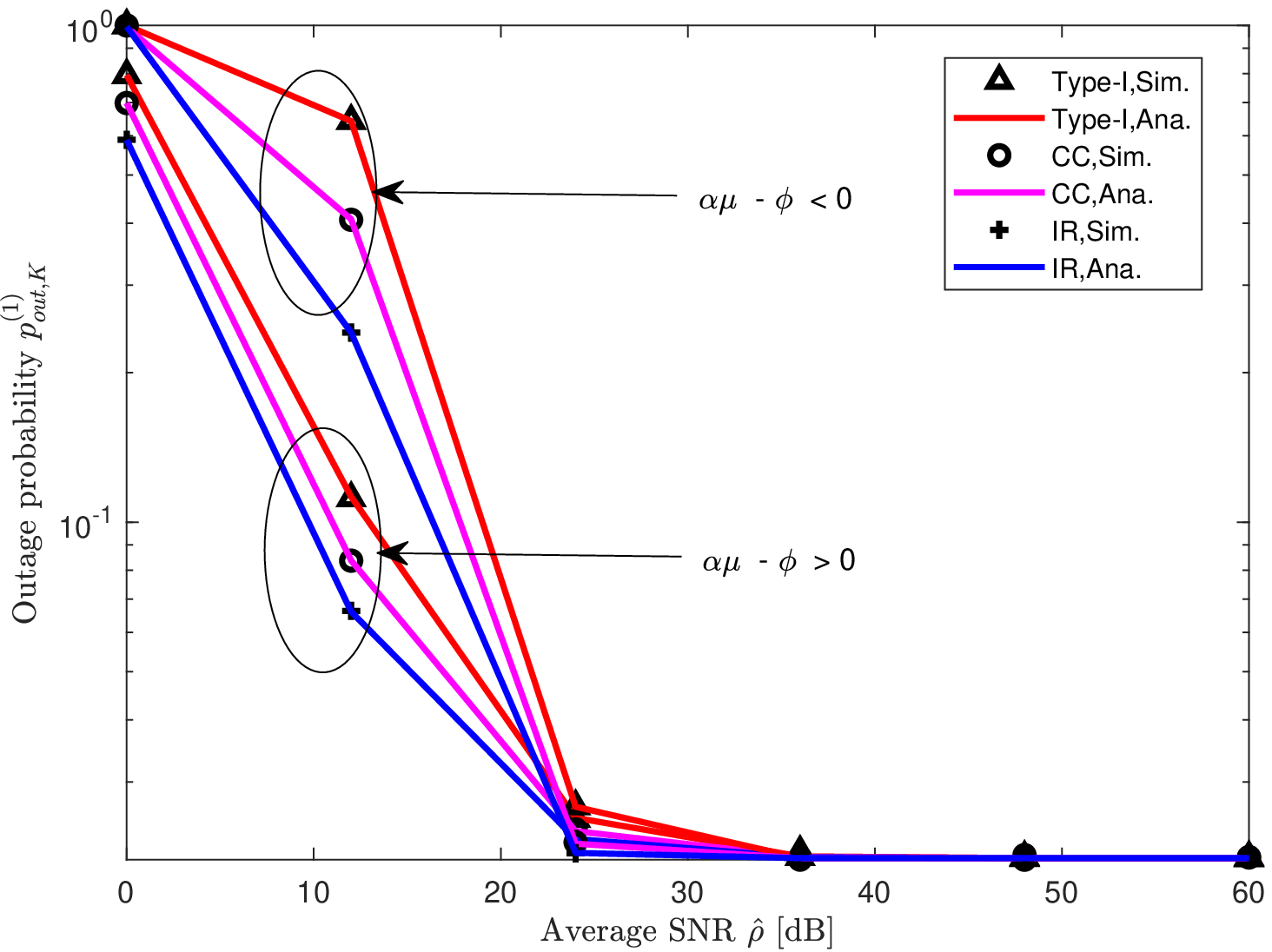}
        \caption{\textcolor[rgb]{0.00,0.00,0.00}{The outage probability $p_{out,K}^{\left( 1 \right)}$ versus the average SNR $\hat \rho $ with ${\lambda _b}=0.03$ in the case of single-hop}.}\label{fig:R3}
\end{figure}
\begin{figure}
        \centering
        \includegraphics[width=3.5in]{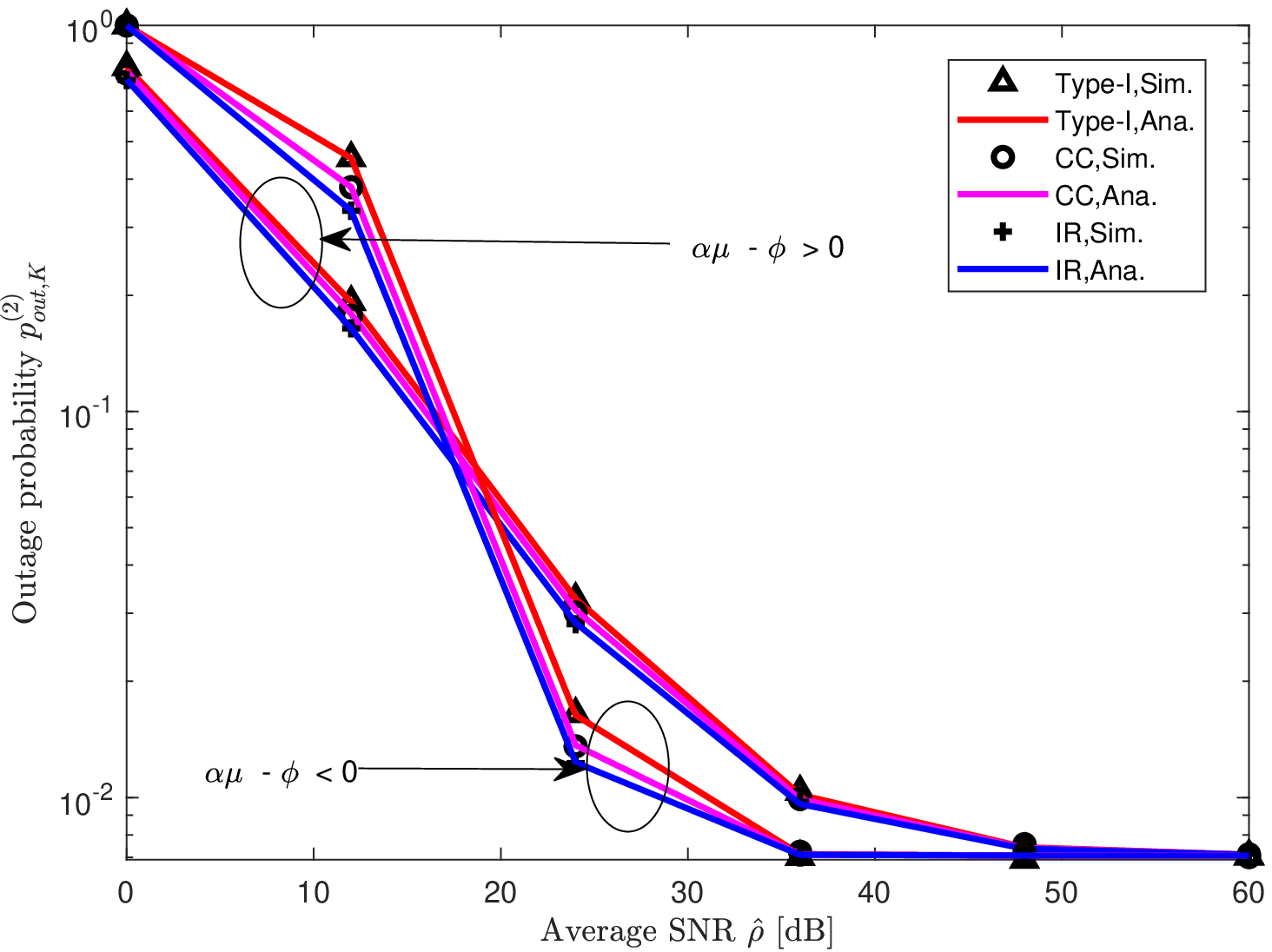}
        \caption{\textcolor[rgb]{0.00,0.00,0.00}{The outage probability $p_{out,K}^{\left( 2 \right)}$ versus the average SNR $\hat \rho $ with ${\lambda _b}=0.01$ and $L=2$  in the case of multi-hop}.}\label{fig:R4}
\end{figure}

\begin{figure}
        \centering
        \includegraphics[width=3.5in]{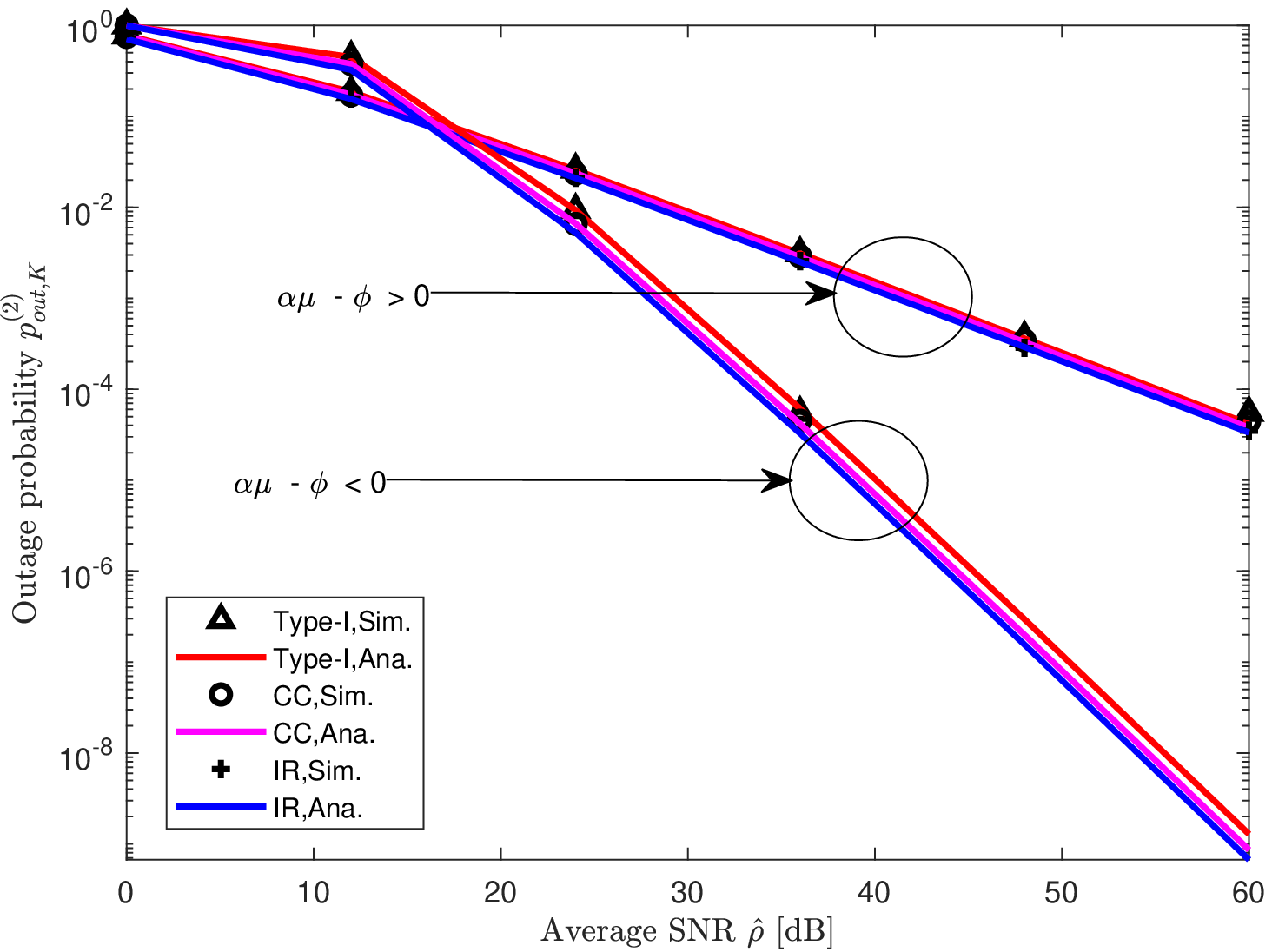}
        \caption{The outage probability $p_{out,K}^{\left( 2 \right)}$ versus the average SNR $\hat \rho $ with ${\lambda _b}=0$ and $L=2$ in the case of multi-hop}.\label{fig:Rt}
\end{figure}

As shown in Figs. \ref{fig:R5} and \ref{fig:R6}, the LTAT is plotted against the average SNR for single-hop and multi-hop HARQ-aided THz communication schemes in the presence of blockage, respectively. Clearly, both of them show an excellent match between the simulation results and the analytical results, which validate our analysis. Unsurprisingly, the LTAT can be improved through increasing the average SNR. It can be seen that HARQ-IR-aided scheme performs the best among the three HARQ-aided schemes, but the LTAT of all the HARQ-aided schemes tend to be fixed under high SNR, because the LTAT is upper bounded by ${{{\bar {\cal T}}}^{\left( L \right)}} \le R/L$ according to \eqref{eqn:LTAT_2} and \eqref{eqn:LTAT_5}. For example, the LTAT of the single-hop systems is bounded as ${{{\bar {\cal T}}}^{\left( L \right)}} \le R=2$~bps/Hz, while the LTAT of the single-hop systems is bounded as ${{{\bar {\cal T}}}^{\left( L \right)}} \le R/L=1$~bps/Hz. By combining the results in Figs. \ref{fig:R4} and \ref{fig:R6}, we conclude that the higher reliability of the multi-hop system is achieved at the price of lower spectral efficiency.

\begin{figure}
        \centering
        \includegraphics[width=3.5in]{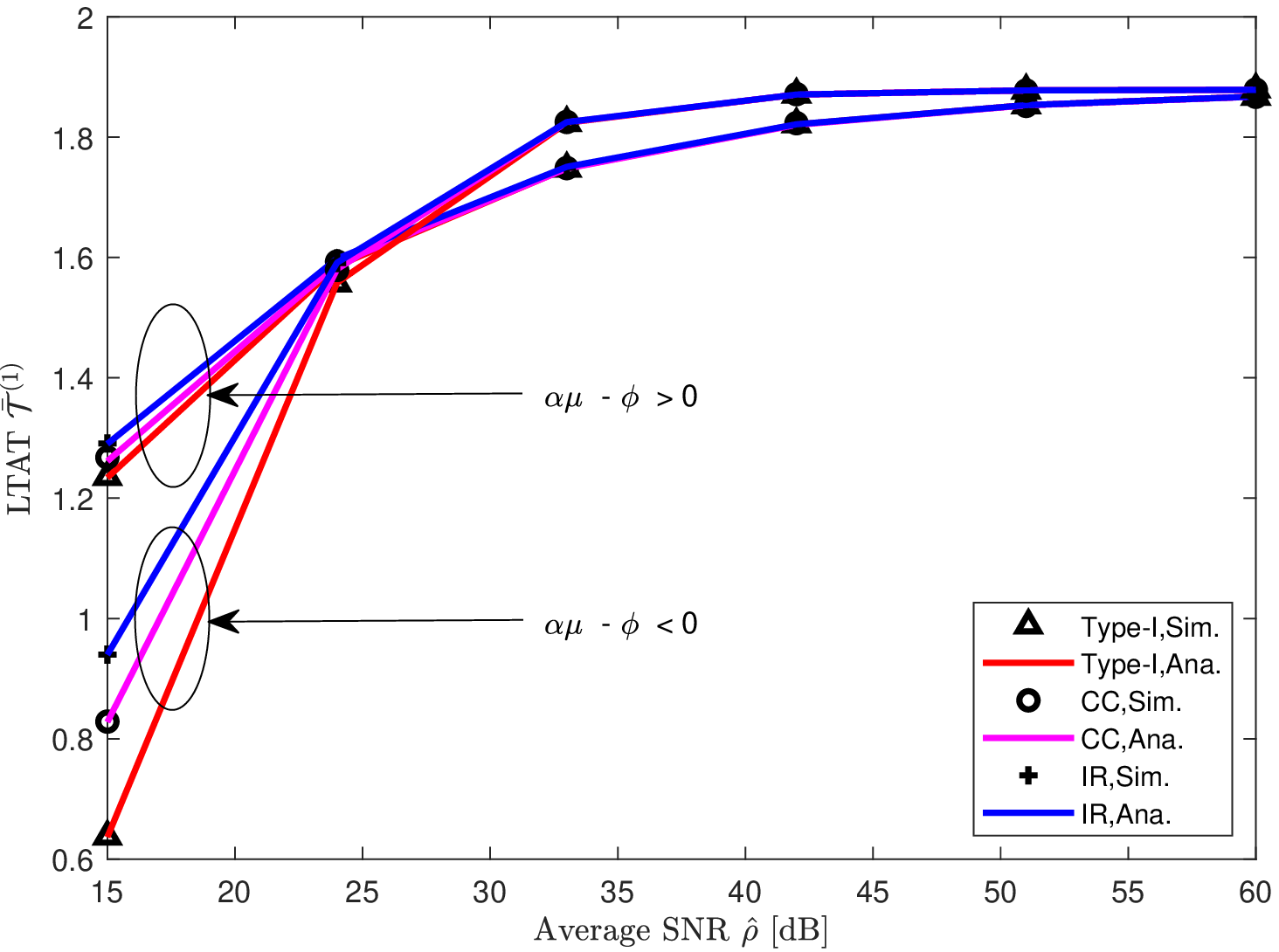}
        \caption{\textcolor[rgb]{0.00,0.00,0.00}{The LTAT ${\bar {\cal T}}^{\left( 1 \right)}$ versus the average SNR $\hat \rho $ with ${\lambda _b}=0.03$ in the case of single-hop}.}\label{fig:R5}
\end{figure}

\begin{figure}
        \centering
        \includegraphics[width=3.5in]{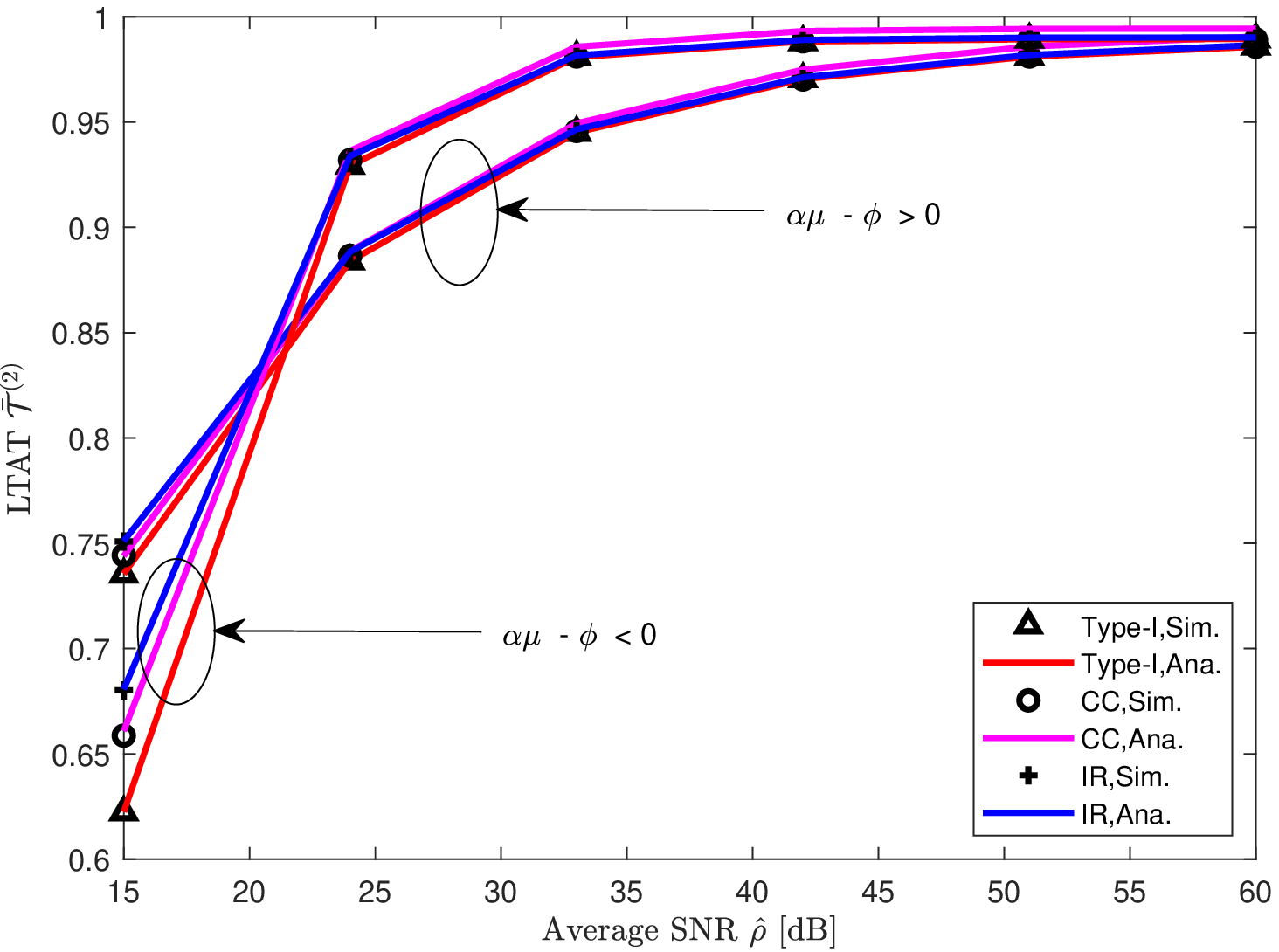}
        \caption{\textcolor[rgb]{0.00,0.00,0.00}{The LTAT ${\bar {\cal T}}^{\left( 2 \right)}$ versus the average SNR $\hat \rho $ with ${\lambda _b}=0.01$ and $L=2$ in the case of multi-hop}.}\label{fig:R6}
\end{figure}

\subsection{Deep Neural Network Based Outage Evaluation}
Since the analytical outage expression of HARQ-IR-aided THz communications involves the Fox's H function, it entails a prohibitively high computational complexity on the performance evaluation. Instead, we propose to utilize a deep neural network (DNN) to estimate the outage probability in a real-time fashion. As opposed to the prior works ({e.g., \cite{9581288}) that the exact results are commonly used to generate the output of the DNN for the dataset, the simulated and asymptotic results are leveraged in this paper to further shorten the offline execution time and lower the computational complexity. This is due to the fact that the asymptotic results and the simulated results can provide a fairly high computational accuracy in low and high outage regimes, respectively. In addition, it is worth mentioning that the computational complexity of the asymptotic outage probability is extremely low relative to the analytical calculation approach. Besides, there is a broad consensus that we can obtain a relatively high accuracy for evaluating a medium-to-high outage probability, even if a small number of simulation runs are carried out. This conclusion can be drawn from Table \ref{tab:EQ16}, wherein the simulation method has a lower execute time than the analytical method for evaluating $p_{out,K}>10^{-5}$, meanwhile guaranteeing a similar accuracy. This dramatic merit of Monte Carlo simulations is helpful to cut down the computation overhead. Thereby, the natural idea is to properly amalgamate the simulated and asymptotic results to generate the dataset. In order to elaborate on how the DNN is implemented to fit the outage curves, DNN structure,  dataset generation, DNN training, performance indicator, and numerical results are detailed as follows.

\begin{table*}
  \centering
  \caption{The computational complexity and accuracy comparisons between the simulation and the analytical methods with $K=3$ and $R=2$~bps/Hz.}
  \label{tab:EQ16}
        \begin{tabular}{|c||c|c|c|c|c|c|}
        \hline
        \multirow{2}*{SNR [db]}& Direct Integral & \multicolumn{3}{c|}{Simulation} & \multicolumn{2}{c|}{Analytical}\\
        \cline{2-7}
        &value&trials&value&time [s]&value&time [s]\\ 
        \hline
        \hline
        0&0.5792&${10^{4}}$&0.5816&0.147214&0.5797&181.328981\\
        \cline{1-7}
        12&0.0459&${10^{5}}$&0.0461&1.424897&0.0460&247.631133\\
        \cline{1-7}
        24&0.0020&${10^{6}}$&0.0020&12.247065&0.0020&278.964050\\
        \cline{1-7}
        36&7.9044e-05&${10^{8}}$&7.9920e-05&88.011415&7.9078e-05&275.858877\\
        \cline{1-7}
        48&3.0607e-06&${10^{9}}$&3.0710e-06&849.788840&3.0620e-06&302.424123\\
        \cline{1-7}
        60&1.1811e-07&${10^{10}}$&1.2380e-07&8450.979962&1.1816e-07&326.277437\\
        \hline
        \end{tabular}
\end{table*}



\subsubsection{DNN Structure}
A fully connected DNN-based outage evaluation framework is outlined in Fig. \ref{fig:R7}. The DNN consists of one input layer, two multiple hidden layers and one output layer. By taking $4$ system parameters as the inputs of DNN and the outage probability as the output, the input and the output layers have $4$ neurons and $1$ neuron, respectively. Herein, the four input parameters are the average transmit signal-to-noise ratio $\hat{\rho }$, the initial transmission rate $R$, the maximum number of transmissions $K$ and the radius of the receive antenna effective area ${w_{{d_0}}}$. The possible values of the system parameters are specified in Table \ref{table_2}. Besides, both the two hidden layers contain $100$ neurons. To avoid the vanishing gradient problem while ensuring the computational simplicity, all the neurons in the hidden and output layers adopt the exponential linear unit (ELU) activation function. 
\begin{table}[htbp]
  \centering
  \caption{Parameter values for DNN training and testing}\label{table_2},
  \begin{tabular}{|c|c|c|c|c|c|}
  \hline
  Parameters & Values & Parameters & Values & Parameters & Values \\
  \hline
  $\hat{\rho }$ & [0,50] & $\alpha$ & 2 & ${\sigma _{\rm{s}}}$ & 1 \\
  \hline
  $R$ & [0,5] & $\mu$ & 1 & $f$ & 275 \\
  \hline
  $K$ & [2,4] & $d$ & 20 & $T$ & 296 \\
  \hline
  ${w_{{d_0}}}$ & [3,4] & $\psi$ & 0.5 & $p$ & 101325 \\
  \hline
\end{tabular}
\end{table}

\begin{figure}
        \centering
        \includegraphics[width=3in]{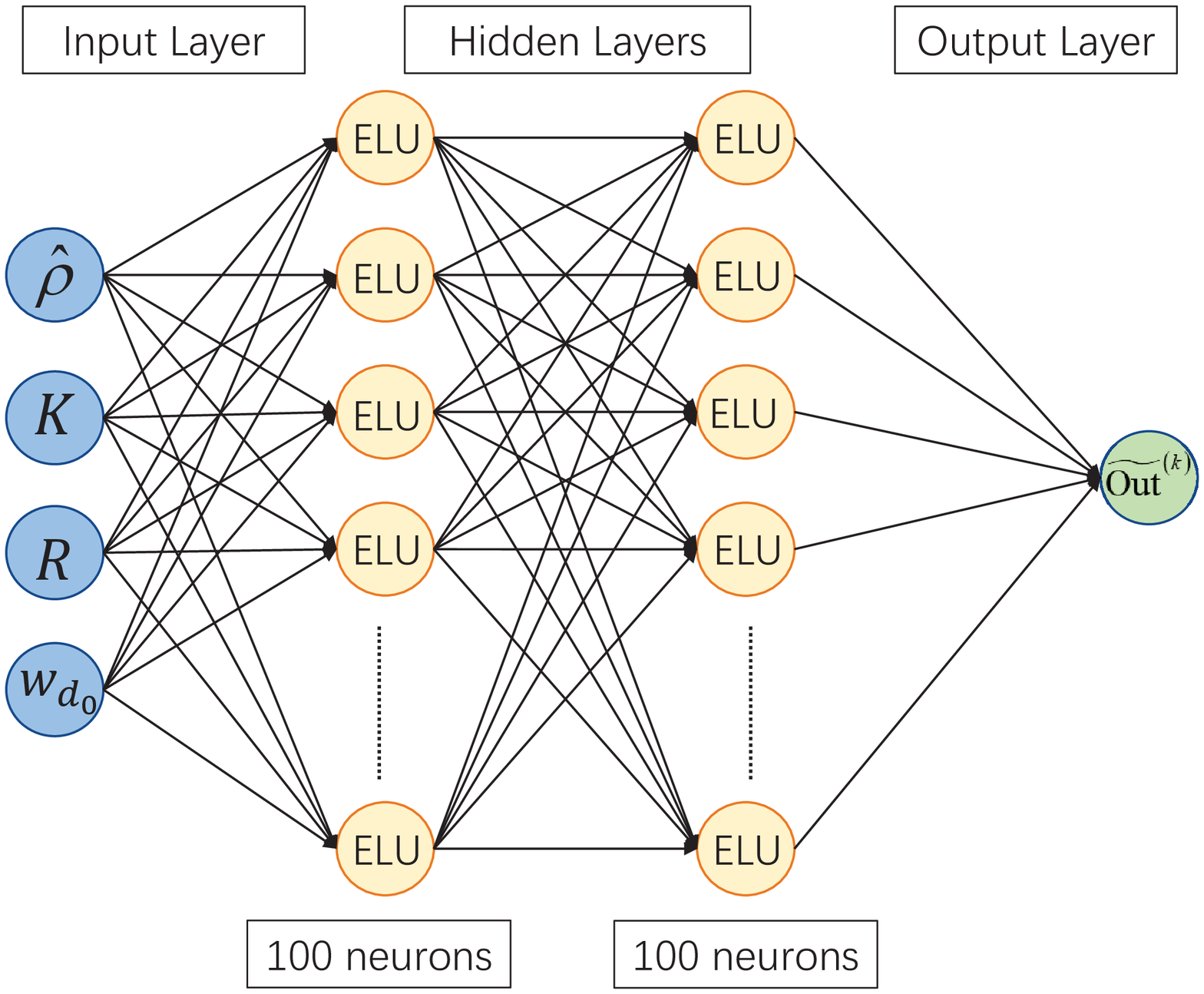}\\
        \caption{The framework of the proposed deep neural network based outage evaluation model.}\label{fig:R7}
\end{figure}


\subsubsection{Dataset Generation}
In the generated dataset ${{{\cal M}}}$, the $k$-th sample can be represented by $({\bf In}^{(k)},{\rm Out}^{(k)})$, where ${\bf In}^{(k)}$ contains four elements as provided in Table \ref{table_2}, ${\rm Out}^{(k)}$ corresponds to the output value which is obtained by combining the asymptotic and simulated results. The dataset ${{{\cal M}}}$ is generated in the following way. Regarding ${\bf In}^{(k)}$, the four inputs are uniformly generated according to the specified ranges in Table \ref{table_2}. According to the input parameters ${\bf In}^{(k)}$, the output ${\rm Out}^{(k)}$ is generated by combining the asymptotic and simulated outage probabilities. In order to reap the benefits of using both asymptotic and simulation results in low and high outage regimes, ${\rm Out}^{(k)}$ can be set as
\begin{equation}\label{eqn:outk}
{\rm{Out}}^{\left( k \right)} = \left\{ {\begin{array}{*{20}{c}}
{p_{out,K}^{sim}}&{p_{out,K}^{sim} \ge \upsilon }\\
{p_{out,K}^{asy}}&{\rm otherwise}
\end{array}} \right.,
\end{equation}
where $p_{out,K}^{sim}$ and ${p_{out,K}^{asy}}$ are the simulated and the asymptotic outage probabilities given the input parameters ${\bf In}^{(k)}$, $\upsilon$ is used to manage the computational accuracy. It should be highlighted that the improvement of the evaluation accuracy of the DNN-based framework is negligible under a sufficiently low $\upsilon$, because the asymptotic results are accurate enough in this case. This can be observed from Figs. \ref{fig:R1}, \ref{fig:new1}, and \ref{fig:R12d}  that the simulated results coincide well with the asymptotic ones. It is noteworthy that we are more interested in the low outage region for reliable communications. Unfortunately, numerous experiments indicate that \eqref{eqn:outk} is not an appropriate choice for defining the output, because the setting of ${\rm Out}^{(k)}$ in \eqref{eqn:outk} makes the impact of low outage region on the loss function nearly negligible. Hence, to better predict the low outage probability and meanwhile accelerate learning and alleviate the over-fitting, the following transformation is utilized such that the low outage is magnified meanwhile the accuracy of the prediction of high outage is also warranted
\begin{equation}\label{eqn:out_trans}
\widetilde {\rm{Out}}^{\left( k \right)} = {\log _2}\left( { - {{\log }_2}\left({\rm{Out}}^{\left( k \right)}\right)} \right).
\end{equation}

The generated dataset ${{{\cal M}}}$ is then partitioned into three non-overlapping subsets, including the training set ${{{\cal M}}_{{\rm{tra}}}}$, the validation set ${{{\cal M}}_{{\rm{val}}}}$, and the testing set ${{{\cal M}}_{{\rm{tes}}}}$. Particularly, ${{{\cal M}}_{{\rm{tra}}}}$ is used to iteratively update the weights and biases of the DNN, ${{{\cal M}}_{{\rm{val}}}}$ is used to terminate the training phase and avoid the occurrence of the over-fitting, ${{{\cal M}}_{{\rm{tes}}}}$ is used to assess the performance of the designed DNN for prediction. Moreover, the data in ${{{\cal M}}_{{\rm{tra}}}}$, ${{{\cal M}}_{{\rm{val}}}}$, and ${{{\cal M}}_{{\rm{tes}}}}$ account for $60\% $, $20\% $, and $20\% $ of the whole dataset ${{{\cal M}}}$, respectively.

\subsubsection{DNN Training}
The deep learning (DL) procedure is split into two stages, including the training stage and the testing stage. Specifically, in the training stage, the adaptive moment estimation (Adam) optimization algorithm is adopted to optimize the weights and biases of the DNN offline. Moreover, the loss function is defined as the mean squared error (MSE) between the actual and the predicted output values. 
Once the training stage is finalized, the obtained DNN can be used for the online prediction. 

\subsubsection{Performance Indicator}
To assess the performance of the designed DNN-based outage evaluation model, the MSE is utilized as the performance indicator. Regarding the testing set, the MSE $\mathcal E_{\rm mse}$ can be calculated as
\begin{align}\label{eqn:PE}
\mathcal E_{\rm mse} = \frac{1}{|{{{\cal M}}_{{\rm{tes}}}}|}\sum\limits_{k = 1}^{|{{{\cal M}}_{{\rm{tes}}}}|} {{{\left( {\widetilde {\rm{Out}}^{\left( k \right)} - \widetilde {\rm{Out}}_{\rm pre}^{\left( k \right)}} \right)}^2}},
\end{align}
where $|{{{\cal M}}_{{\rm{tes}}}}|$ denotes the cardinality of the testing set, $\widetilde {\rm{Out}}_{\rm pre}^{\left( k \right)}$ refers to the predicted value of the DNN. 

\subsubsection{Numerical Results}
For verification, Matlab is used to collect the dataset $\mathcal M$ as well as analyze the data, and the DNN-based outage evaluation model is implemented in a PC with an AMD Ryzen 7 4800H, 64-core processor, Nvidia GeForce RTX-2060 super GPUs, Python 3.7.9 and Pytorch 1.11.0 as the DL framework. As shown in Fig. \ref{fig:R8}, the MSE $\mathcal E_{\rm mse}$ is plotted against the total number of samples in the dataset $\mathcal M$. It can be seen that the MSE gradually decreases with the increase of the total number of samples. Nevertheless, once the total number of samples is larger than $ 1.25 \times {10^4}$, the MSE of the trained DNN-based model closely approaches to the performance floor, i.e., $\mathcal E_{\rm mse} \approx 0.03$. Thus the number of samples should be properly chosen to balance the tradeoff between the accuracy and complexity. Moreover, it is observed that the MSE values of the training, the validation and the testing sets are close to each other. This finding indicates that the DNN is well trained \cite{8738823}.
\begin{figure}
        \centering
        \includegraphics[width=3.5in]{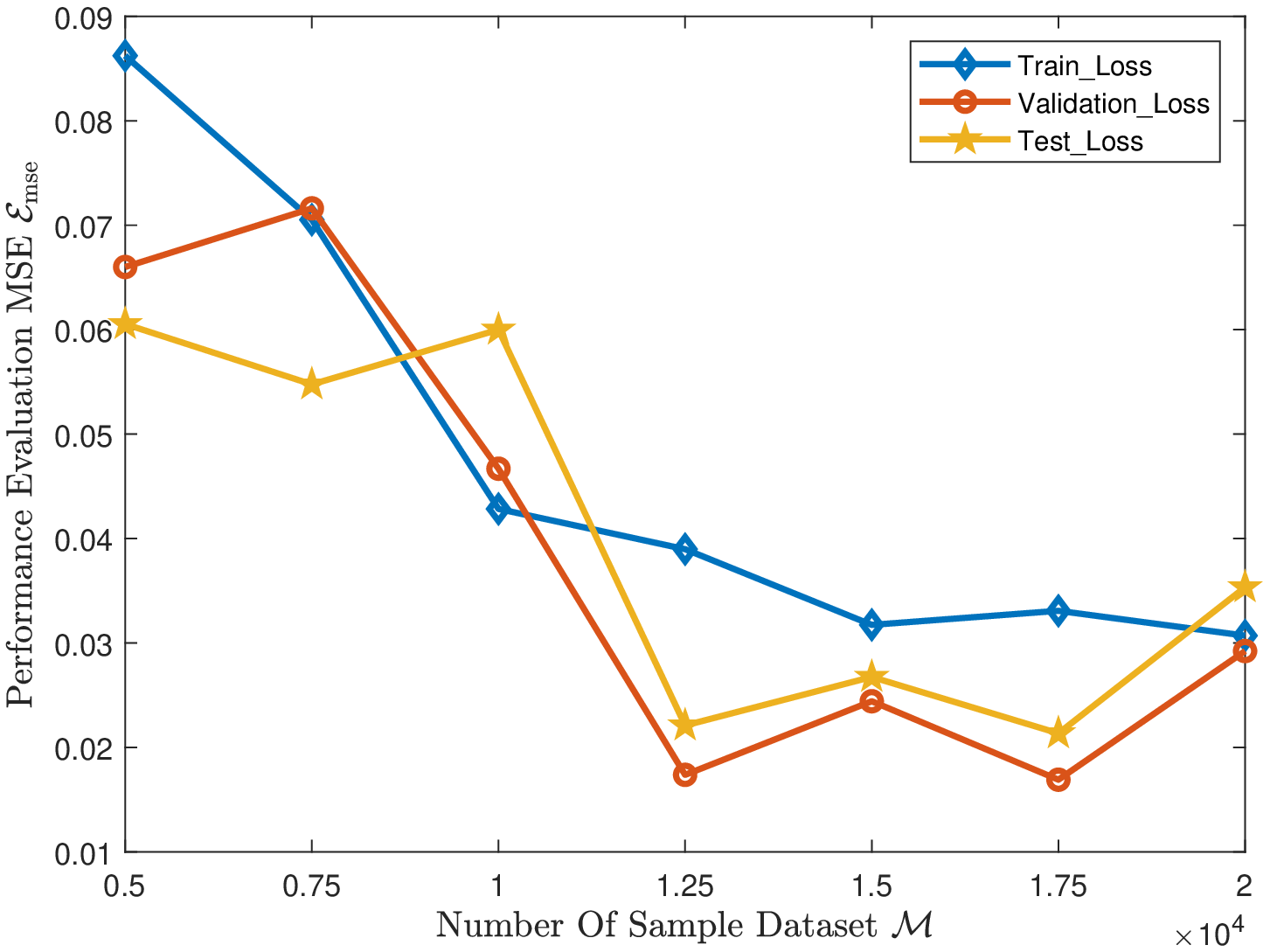}\\
        \caption{The convergence curve of the MSE of the proposed DNN-based outage evaluation model with $\upsilon  = {10^{ - 4}}$.}\label{fig:R8}
\end{figure}
\subsection{Optimal Rate Selection}
To illustrate the benefits of using our analytical results, the transmission rate is optimally designed to maximize the LTAT while guaranteeing the outage constraint in this section. From \eqref{eqn:LTAT_1}, it is found that the LTAT is a complex function of the transmission rate, because both the numerator and denominator involve the transmission rate. Hence, it is very cumbersome to apply the analytical result to solve the optimization problem. Thereupon, another alternative way is using the asymptotic result to reduce the computational complexity. Nonetheless, this approach may yield a nonnegligible gap under low SNR by comparing to the true optimal transmission rate. Besides, this motivates us to utilize the DNN-based outage evaluation method to simplify the optimization meanwhile ensuring the computational accuracy.
By taking the case without blockage as an example, the optimal rate design problem can be casted as
\begin{equation}\label{eqn:rate}
        \begin{array}{*{20}{c}}
          {\mathop {\max }\limits_R }&{\overline {{\cal T}} }\\
          {{\rm{s}}.{\rm{t}}.}&{{p_{out,K}} \le \varepsilon }
        \end{array}
\end{equation}
where $\varepsilon $ denotes the maximum allowable outage probability. It is noteworthy that the asymptotic result can be used to obtain a globally sub-optimal solution \cite{7959548}. This is because the numerator and denominator of $\bar {\mathcal T}$ are concave and convex functions of $R$, respectively, by considering the convexity of ${{{\cal G}_K}\left( {{2^R}} \right)}$, as proved in Section \ref{sec:cr}. Hence, \eqref{eqn:rate} is a concave fractional programming problem, which can be globally solved with Dinkelbach's method \cite{dinkelbach1967nonlinear}. Additionally, the proposed DNN-based method can be used to accurately approximate the outage probability, which can be adopted to numerically solved this optimization problem. 


\begin{figure}
        \centering
        \includegraphics[width=3.5in]{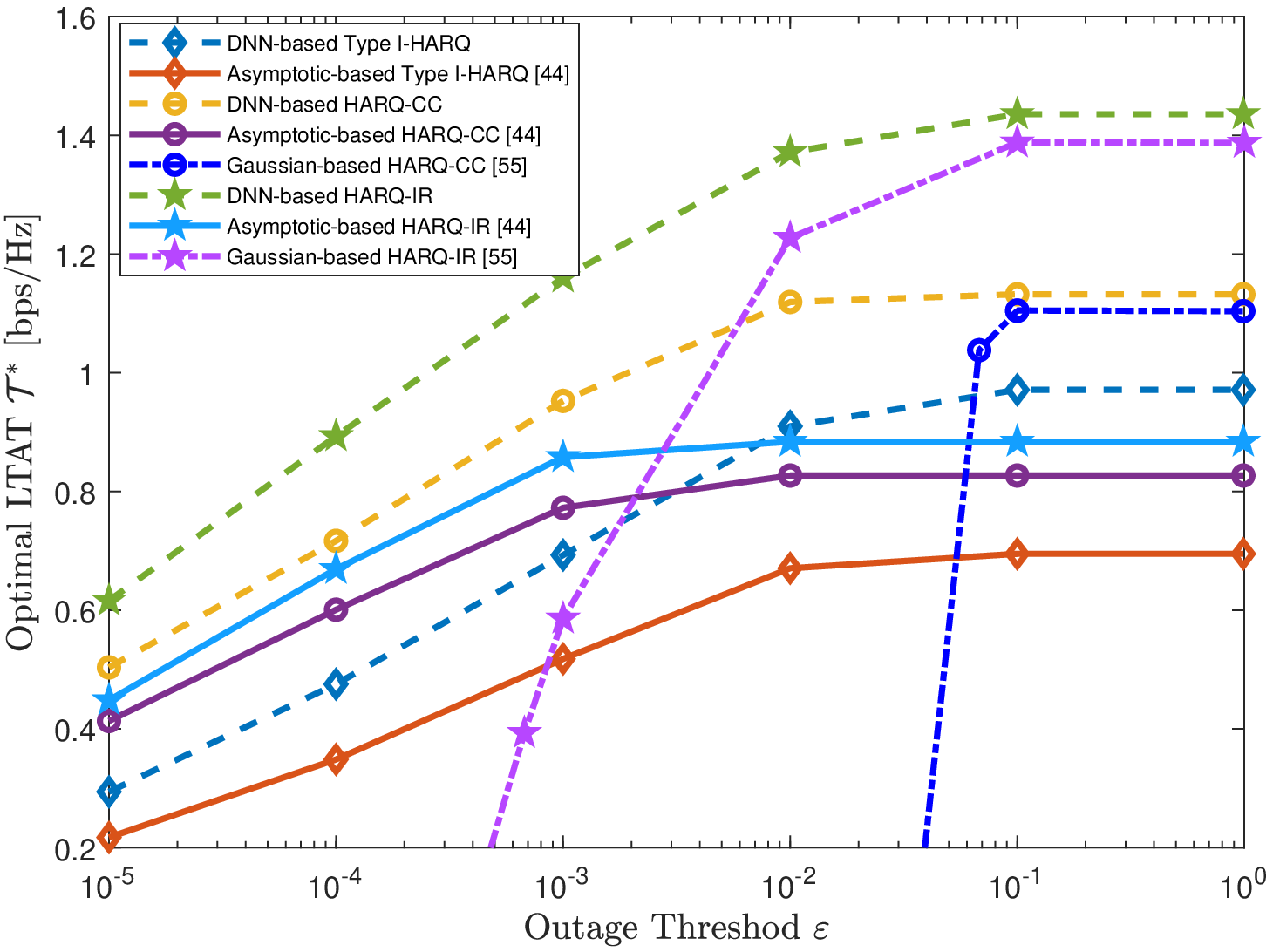}\\
        \caption{The optimal LTAT $\mathcal T^*$ versus the outage threshold $\varepsilon$ with $K=4$ and $\hat \rho=20$~dB.}\label{fig:R9}
\end{figure}
Fig. \ref{fig:R9} shows the comparison between the proposed DNN-based and the asymptotic-based methods in terms of the optimal LTAT for different HARQ-aided schemes, where the asymptotic-based method \cite{7959548} and the Gaussian-based method \cite{wu2010performance} are used for benchmarking purpose. As expected, the optimal LTAT $\mathcal T^*$ increases to an upper bound as the outage threshold increases. Similarly, it is found that the HARQ-IR-aided scheme performs the best. Moreover, it is observed that the DNN-based method outperforms the asymptotic-based and the Gaussian-based ones, which highlight the considerable significance of our contribution.




\section{Conclusions}\label{sec:con}
This paper has first investigated the outage performance of HARQ-IR-aided THz communications in the absence of the blockage. Particularly, the Mellin transform has been adopted to derive the outage probability, which has facilitated the asymptotic outage analysis to gain useful insights. In order to explore more insights, the asymptotic performance metrics of HARQ-IR-aided schemes have been investigated, including the pointing error and fading channels, power allocation, modulation and coding gain, and diversity order. 
Moreover, the LTAT of the HARQ-IR-aided scheme has been derived in terms of the outage probability. Furthermore, we have studied the outage and LTAT performance of HARQ-IR-aided THz communications in the presence of blockage, where a 3D blockage modeling has been considered. All the analytical results have been validated by the numerical analysis. 


Furthermore, to render the tractability of the analysis, the effect of line-of-sight (LoS) has been overlooked in this paper. However, the $\alpha$-$\mu$ fading is a generalized fading channel model that can closely approximate Rician fading to account for the impact of LoS. Nevertheless, the effect of LoS on the performance of THz communications will be accurately analyzed in our future work.


Last but not the least, a novel DNN-based outage evaluation framework has been designed to compute the outage probability of the HARQ-IR-aided THz systems. The core of the proposed framework lies in the fact that Monte Carlo simulations and asymptotic outage analysis can offer highly accurate outage evaluations in low SNR and high SNR, respectively. By combining the asymptotic and simulation results, the proposed framework has been capable of striking a balanced tradeoff between high accuracy and low complexity. In the end, this novel outage evaluation framework has been used for the optimal rate selection, which is superior to the asymptotic based optimization.



\appendices

\section{Proof of Lemma \ref{eqn:le_1}}\label{sec:le1}
By applying Parseval's Type Property of Mellin transform \cite[Eq. 8.3.22]{debnath2014integral} to the inner integral of (\ref{eqn:IR_4}), one has
\begin{align}\label{eqn:IR_5}
p_{out,K} =& \frac{1}{{2\pi \rm i}}\int_{{{\rm{c}}_1} - \rm i\infty }^{{{\rm{c}}_1}{\rm{ + i}}\infty } {\frac{{{2^{ - Rt}}}}{{ - t}}\prod\limits_{k = 1}^K {\frac{\phi }{{2\Gamma (\mu )\Gamma \left( { - t} \right)}}}}\notag\\
&\times\frac{1}{{2\pi \rm i}}\int_{{{\rm{c}}_2} - \rm i\infty }^{{{\rm{c}}_2}{\rm{ + i}}\infty } {\frac{{\Gamma \left( {\frac{s}{2}} \right)\Gamma \left( { - t - \frac{s}{2}} \right)\Gamma (\mu  - \frac{s}{\alpha })\Gamma \left( {\phi  - s} \right)}}{{\Gamma \left( {1 + \phi  - s} \right)}}}\notag\\
&\times{{\left( {{{\left( {{\rho _k} {{\left| {{h_l}} \right|}^2}} \right)}^{ - \frac{1}{2}}}{{\left( {\frac{\mu }{{\hat h_f^\alpha S_{\rm{0}}^\alpha }}} \right)}^{\frac{1}{\alpha }}}} \right)}^s}ds dt,
\end{align}
where ${{{\rm{c}}_2}}  > 0$ and (\ref{eqn:IR_5}) holds by using the following two Mellin transforms
\begin{multline}\label{eqn:IR_6}
\int_0^\infty  {x_k^{s - 1}\Gamma \left( {\frac{{\alpha \mu  - \phi }}{\alpha },\frac{{\mu x_k^\alpha }}{{S_0^\alpha \hat h_f^\alpha }}} \right)} d{x_k}\\
= \frac{1}{s}{\left( {\frac{\mu }{{S_0^\alpha \hat h_f^\alpha }}} \right)^{ - \frac{s}{\alpha }}}\Gamma \left( {\frac{{\alpha \mu  - \phi  + s}}{\alpha }} \right),
\end{multline}
\begin{multline}\label{eqn:IR_7}
\int_0^\infty  {x_k^{s - 1}{{(1 + {\rho _k} {{\left| {{h_l}} \right|}^2}x_k^2)}^t}} d{x_k} \\
= \frac{1}{2}{\left( {\frac{1}{{\rho _k {{\left| {{h_l}} \right|}^2}}}} \right)^{\frac{s}{2}}}\frac{{\Gamma \left( {\frac{s}{2}} \right)\Gamma \left( { - t - \frac{s}{2}} \right)}}{{\Gamma \left( { - t} \right)}},
\end{multline}
where \eqref{eqn:IR_6} and \eqref{eqn:IR_7} follows from \cite[Eq. 6.455]{gradshteyn2014table} and \cite[Eq. A-1.24]{debnath2014integral}, respectively. By recognizing the inner integral as a Mellin-Barnes integral and identifying it with the Fox's $\text H$-function \cite[Eq.1.2]{mathai2009h}, the outage probability of HARQ-IR-aided THz communications is consequently derived as (\ref{eqn:IR_8}), as shown at the top of the next page. Obviously, (\ref{eqn:IR_8}) can be expressed in the form of the inverse Laplace transform by a variable substitution as (\ref{eqn:IR_9}).
\begin{figure*}[!t]
\begin{equation}\label{eqn:IR_8}
p_{out,K} = {\left( {\frac{\phi }{{2\Gamma (\mu )}}} \right)^K}\frac{1}{{2\pi {\rm{i}}}}\int_{{{\rm{c}}_1} - \rm i\infty }^{{{\rm{c}}_1}{\rm{ + i}}\infty } {\frac{{{2^{ - Rt}}}}{{ - t}}\prod\limits_{k = 1}^K {\frac{1}{{\Gamma \left( { - t} \right)}}H_{3,2}^{1,3}\left[ {{{\left( {\rho _k {{\left| {{h_l}} \right|}^2}} \right)}^{\frac{1}{2}}}{{\left( {\frac{\mu }{{\hat h_f^\alpha S_{\rm{0}}^\alpha }}} \right)}^{ - \frac{1}{\alpha }}}\left| {_{\left( {0,\frac{1}{2}} \right),\left( { - \phi ,1} \right)}^{\left( {1 + t,\frac{1}{2}} \right),\left( {1 - \mu ,\frac{1}{\alpha }} \right),\left( {1 + \phi ,1} \right)}} \right.} \right]} dt}
\end{equation}
\hrulefill
\end{figure*}

\section{Proof of Theorem \ref{eqn:le_2}}\label{sec:le2}
By using the residue theorem, the Fox's $\text H$-function in \eqref{eqn:IR_9} can be expanded as \eqref{eqn:res}, as shown at the top of the next page, where ${\mathop {{\rm{Res}}}\limits_{s = a} }\{f(s)\}$ denotes the residue of $f(s)$ at $s=a$.
\begin{figure*}[!t]
\begin{align}\label{eqn:res}
&{H_{3,2}^{1,3}\left[ {{{\left( {{q_k}\hat \rho {{\left| {{h_l}} \right|}^2}} \right)}^{\frac{1}{2}}}{{\left( {\frac{\mu }{{\hat h_f^\alpha S_{\rm{0}}^\alpha }}} \right)}^{ - \frac{1}{\alpha }}}\left| {_{\left( {0,\frac{1}{2}} \right),\left( { - \phi ,1} \right)}^{\left( {1 + t,\frac{1}{2}} \right),\left( {1 - \mu ,\frac{1}{\alpha }} \right),\left( {1 + \phi ,1} \right)}} \right.} \right]}\notag\\
 =&  \sum\nolimits_{\scriptstyle\left\{ {a:a = \phi  + n,n \in \left[ {0,\infty } \right]} \right\}\hfill\atop
{\scriptstyle\bigcup {\left\{ {a:a = \mu \alpha  + n\alpha ,n \in \left[ {0,\infty } \right]} \right\}} \hfill\atop
\scriptstyle\bigcup {\left\{ {a:a = 2t + 2n,n \in \left[ {0,\infty } \right]} \right\}} \hfill}} {\mathop {{\rm{Re}}s}\limits_{s = a} \left\{ {\frac{{\Gamma \left( {\frac{s}{2}} \right)\Gamma \left( { - t - \frac{s}{2}} \right)\Gamma (\mu  - \frac{s}{\alpha })\Gamma \left( {\phi  - s} \right)}}{{\Gamma \left( {1 + \phi  - s} \right)}}{{\left( {{{\left( {{{q_k}\hat \rho}{{\left| {{h_l}} \right|}^2}} \right)}^{ - \frac{1}{2}}}{{\left( {\frac{\mu }{{\hat h_f^\alpha S_{\rm{0}}^\alpha }}} \right)}^{\frac{1}{\alpha }}}} \right)}^s}} \right\}}
\end{align}
\hrulefill
\end{figure*}
Furthermore, by using the dominant term approximation together with the condition $t <  - \max \{ \phi ,\mu \alpha \} /2$, \eqref{eqn:res} is asymptotic to
\begin{align}\label{eqn:res1}
&{H_{3,2}^{1,3}\left[ {{{\left( {{q_k}\hat \rho {{\left| {{h_l}} \right|}^2}} \right)}^{\frac{1}{2}}}{{\left( {\frac{\mu }{{\hat h_f^\alpha S_{\rm{0}}^\alpha }}} \right)}^{ - \frac{1}{\alpha }}}\left| {_{\left( {0,\frac{1}{2}} \right),\left( { - \phi ,1} \right)}^{\left( {1 + t,\frac{1}{2}} \right),\left( {1 - \mu ,\frac{1}{\alpha }} \right),\left( {1 + \phi ,1} \right)}} \right.} \right]}\notag\\
\simeq & - \mathop {\lim }\limits_{s \to \phi } (s - \phi )\frac{{\Gamma \left( {\frac{s}{2}} \right)\Gamma \left( { - t - \frac{s}{2}} \right)\Gamma (\mu  - \frac{s}{\alpha })\Gamma \left( {\phi  - s} \right)}}{{\Gamma \left( {1 + \phi  - s} \right)}} \notag\\
&\times {\left( {{{\left( {{{q_k}\hat \rho} {{\left| {{h_l}} \right|}^2}} \right)}^{ - \frac{1}{2}}}{{\left( {\frac{\mu }{{\hat h_f^\alpha S_{\rm{0}}^\alpha }}} \right)}^{\frac{1}{\alpha }}}} \right)^s}\notag\\
 &- \mathop {\lim }\limits_{s \to \mu \alpha } (s - \mu \alpha )\frac{{\Gamma \left( {\frac{s}{2}} \right)\Gamma \left( { - t - \frac{s}{2}} \right)\Gamma (\mu  - \frac{s}{\alpha })\Gamma \left( {\phi  - s} \right)}}{{\Gamma \left( {1 + \phi  - s} \right)}}\notag\\
 &\times{\left( {{{\left( {{{q_k}\hat \rho} {{\left| {{h_l}} \right|}^2}} \right)}^{ - \frac{1}{2}}}{{\left( {\frac{\mu }{{\hat h_f^\alpha S_{\rm{0}}^\alpha }}} \right)}^{\frac{1}{\alpha }}}} \right)^s}\notag\\
  =& \left({{{q_k}\hat \rho}}\right)^{ - \frac{\phi }{2}}\Gamma \left( { - t - \frac{\phi }{2}} \right) B \notag\\
 &+ \left({{{q_k}\hat \rho}}\right)^{ - \frac{{\mu \alpha }}{2}}\Gamma \left( { - t - \frac{{\mu \alpha }}{2}} \right)C,
\end{align}
where $B$ and $C$ are given by
\begin{equation}\label{eqn:B}
B = \Gamma \left( {\frac{\phi }{2}} \right)\Gamma (\mu  - \frac{\phi }{\alpha }){{\left( {{{\left| {{h_l}} \right|}^{ - 1}}{{\left( {\frac{\mu }{{\hat h_f^\alpha S_{\rm{0}}^\alpha }}} \right)}^{\frac{1}{\alpha }}}} \right)}^\phi },
\end{equation}
\begin{equation}\label{eqn:C}
C = \frac{\alpha }{{\phi  - \mu \alpha }}\Gamma \left( {\frac{{\mu \alpha }}{2}} \right){{\left( {{{\left| {{h_l}} \right|}^{ - 1}}{{\left( {\frac{\mu }{{\hat h_f^\alpha S_{\rm{0}}^\alpha }}} \right)}^{\frac{1}{\alpha }}}} \right)}^{\mu \alpha }}.
\end{equation}

By substituting \eqref{eqn:res1} into \eqref{eqn:IR_9}, the asymptotic outage probability can be obtained as
\begin{align}\label{eqn:IR_90}
{p_{out,K}} &\simeq {\left( {\frac{\phi }{{2\Gamma (\mu )}}} \right)^K}\frac{1}{{2\pi {\text{i}}}}\int_{{{\text{c}}_1}-{\text{i}}\infty }^{{{\text{c}}_1}+{\text{i}}\infty } {\frac{{{2^{ - Rt}}}}{{ - t}}\prod\limits_{k = 1}^K }\notag\\
&\times{\frac{1}{{\Gamma \left( { - t} \right)}}\left( \begin{gathered}
  \left({{{q_k}\hat \rho}}\right)^{ - \frac{\phi }{2}}\Gamma \left( { - t - \frac{\phi }{2}} \right)B \hfill \\
   + \left({{{q_k}\hat \rho}}\right)^{ - \frac{{\mu \alpha }}{2}}\Gamma \left( { - t - \frac{{\mu \alpha }}{2}} \right)C \hfill \\
\end{gathered}  \right)} dt??
\end{align}
Furthermore, \eqref{eqn:IR_90} can be expanded as \eqref{eqn:IR_900}, as shown at the top of the next page,
\begin{figure*}[!t]
\begin{align}\label{eqn:IR_900}
p_{out,K} &\simeq {\left( {\frac{\phi }{{2\Gamma (\mu )}}} \right)^K}\frac{1}{{2\pi \rm i}}\int_{{{\rm{c}}_1} - \rm i\infty }^{{{\rm{c}}_1}{\rm{ + i}}\infty } {\frac{{{2^{ - Rt}}}}{{ - t}}}\prod\limits_{k = 1}^K {\left( {{{q_k^{ - \frac{\phi }{2}}{\hat \rho ^{ - \frac{\phi }{2}}}}}{B}\frac{{\Gamma \left( { - t - \frac{\phi }{2}} \right)}}{{\Gamma \left( { - t} \right)}} + {{q_k^{ - \frac{{\mu \alpha } }{2}}{\hat \rho ^{ - \frac{{\mu \alpha }}{2}}}}}{C}\frac{{\Gamma \left( { - t - \frac{{\mu \alpha }}{2}} \right)}}{{\Gamma \left( { - t} \right)}}} \right)} dt \notag\\
& = {\left( {\frac{\phi }{{2\Gamma (\mu )}}} \right)^K}\sum\limits_{{\bf{a}} \in {\bf{\Omega }}} {{{\hat \rho }^{ {\frac{{\mu \alpha  - \phi }}{2}} \sum\limits_{k = 1}^K {{a_k}}  - \frac{{K\mu \alpha }}{2}}}\prod\limits_{k = 1}^K {{q_k^{\frac{{\mu \alpha  - \phi }}{2}{a_k} - \frac{{\mu \alpha }}{2}}}{B^{{a_k}}}{C^{1 - {a_k}}}} } \notag\\
&\times\frac{1}{{2\pi \rm i}}\int_{\rm{c_1} - \rm i\infty }^{{{\rm{c}}_1}{\rm{ + i}}\infty } {{{\left( {{2^{ - R}}} \right)}^t}\frac{{\Gamma \left( { - t} \right)}}{{\Gamma \left( {1 - t} \right)}}}\prod\limits_{k = 1}^K {{{\left( {\frac{{\Gamma \left( { - t - \frac{\phi }{2}} \right)}}{{\Gamma \left( { - t} \right)}}} \right)}^{{a_k}}}{{\left( {\frac{{\Gamma \left( { - t - \frac{{\mu \alpha }}{2}} \right)}}{{\Gamma \left( { - t} \right)}}} \right)}^{1 - {a_k}}}} dt,
\end{align}
\hrulefill
\end{figure*}
where ${\bf{\Omega }} = \left\{ { {\left( {{a_1}, \cdots ,{a_K}} \right)} : {a_k} = \left\{ {0,1} \right\},1 \le k \le K} \right\}$. To obtain the asymptotic outage expression as $\hat \rho\to \infty$, it suffices to consider the dominant terms ${{{\hat \rho }^{ {\frac{{\mu \alpha  - \phi }}{2}}\sum\nolimits_{k = 1}^K {{a_k}}  - \frac{{K\mu \alpha }}{2}}}}$ in \eqref{eqn:IR_900}. Thus, it amounts to finding the minimum value of the corresponding exponent ${ {\frac{{\mu \alpha  - \phi }}{2}}\sum\nolimits_{k = 1}^K {{a_k}}  - \frac{{K\mu \alpha }}{2}}$. Apparently, the minimum value of the exponent depends on the sign of ${\mu \alpha  - \phi }$. Specifically, the minimum of the exponent is attained with $a_1=\cdots=a_K=1$ if ${\mu \alpha  - \phi }>0$ and $a_1=\cdots=a_K=0$ otherwise.
Thus the outage probability is asymptotically expressed as \eqref{eqn:IR_101}, as shown at the top of the next page.
\begin{figure*}[!t]
\begin{align}\label{eqn:IR_101}
{p_{out,K}} \simeq \left\{ \begin{gathered}
  {\left( {\frac{\phi }{{2\Gamma (\mu )}}} \right)^K}{{\hat \rho }^{\frac{{ - \phi K}}{2}}}\prod\limits_{k = 1}^K {q_k^{\frac{{ - \phi }}{2}}B\frac{1}{{2\pi {\text{i}}}}\int_{{{\text{c}}_1} - {\text{i}}\infty }^{{{\text{c}}_1} + {\text{i}}\infty } {{{\left( {{2^{ - R}}} \right)}^t}\frac{{\Gamma \left( { - t} \right)}}{{\Gamma \left( {1 - t} \right)}}\prod\limits_{k = 1}^K {\frac{{\Gamma \left( { - t - \frac{\phi }{2}} \right)}}{{\Gamma \left( { - t} \right)}}} dt,\mu \alpha  - \phi  > 0} }  \hfill \\
  {\left( {\frac{\phi }{{2\Gamma (\mu )}}} \right)^K}{{\hat \rho }^{\frac{{ - \mu \alpha K}}{2}}}\prod\limits_{k = 1}^K {q_k^{\frac{{ - \mu \alpha }}{2}}C\frac{1}{{2\pi {\text{i}}}}\int_{{{\text{c}}_1} - {\text{i}}\infty }^{{{\text{c}}_1} + {\text{i}}\infty } {{{\left( {{2^{ - R}}} \right)}^t}\frac{{\Gamma \left( { - t} \right)}}{{\Gamma \left( {1 - t} \right)}}\prod\limits_{k = 1}^K {\frac{{\Gamma \left( { - t - \frac{{\mu \alpha }}{2}} \right)}}{{\Gamma \left( { - t} \right)}}} dt,\mu \alpha  - \phi  < 0} }  \hfill \\
\end{gathered}  \right.
\end{align}
\hrulefill
\end{figure*}
By using the residue theorem and identifying the integral in \eqref{eqn:IR_101} with Meijer's G-function \cite[Eq. 9.301]{gradshteyn2014table} along with \cite[Eq. 9.312]{gradshteyn2014table}, the asymptotic outage probability $p_{out,K}^{asy}$ of HARQ-IR-aided THz communications can be derived as \eqref{eqn:IR_1000}.

\section{Proof of Remark \ref{eqn:mh_do}}\label{sec:le3}

The asymptotic outage probabilities of multi-hop HARQ-IR-aided THz communications in the cases of $P_B > 0$ and $P_B = 0$ are separately discussed in the following.
\subsubsection{$P_B > 0$}
If $P_B > 0$, the occurrence of the outage event in the high SNR regime, i.e., $\hat \rho \to \infty$, only depends on whether any hop is blocked or not. Accordingly, ${p_{suc,k}^{\left( L \right)}} = {P_N}^L$ if $k=L$ and zero otherwise. With \eqref{eqn:MH_1}, the outage probability is asymptotic to \eqref{mh_do2}.
\subsubsection{$P_B = 0$}
By noticing that the outage event could occur at any hop, using the law of total probability yields the outage expression of the multi-hop HARQ-IR-aided THz communications as \eqref{eqn:le301}, as shown at the top of the next page, where the term $\chi _n$ represents the probability of the event that the transmissions in the first $n-1$ hops are successful but the outage takes place during the $n$-th hop. If there is no blockage during each hop, i.e., $P_B = 0$, according to Theorem \ref{eqn:le_2}, the outage probability ${p_{out,{k}}^l}$ is proportional to ${{\hat \rho }^{ - \frac{{ {k }\min \{ \phi ,\alpha \mu \} }}{2}}}$, i.e., $p_{out,k}^l \propto {{\hat \rho }^{ - \frac{{k\min \{ \phi ,\alpha \mu \} }}{2}}}$. Besides, by using the identity $p_{suc,k}^l = p_{out,k - 1}^l - p_{out,k}^l$, we have $p_{suc,k}^l  \propto {{\hat \rho }^{ - \frac{{\left( {k - 1} \right)\min \{ \phi ,\alpha \mu \} }}{2}}}$. By substituting the asymptotic forms of $p_{out,k}^l$ and $p_{suc,k}^l$ into \eqref{eqn:le301}, applying the dominant term approximation to \eqref{eqn:le301} leads to $p_{out,K}^{\left( L \right)} \simeq {\chi _L} \propto {{\hat \rho }^{ - \frac{{\left( {K - L + 1} \right)\min \{ \phi ,\alpha \mu \} }}{2}}}$. As a consequence, the diversity order can be derived as \eqref{mh_do3}.
\begin{figure*}[!t]
\begin{align}\label{eqn:le301}
p_{out,K}^{\left( L \right)} =& \underbrace {p_{out,K}^1}_{{\chi _1}} + \underbrace {\left( {p_{suc,1}^1p_{out,K - 1}^2 + p_{suc,2}^1p_{out,K - 2}^2 +  \cdots  + p_{suc,K - 1}^1p_{out,1}^2} \right)}_{{\chi _2}}\notag\\
& \cdots  + \underbrace {\sum\limits_{\scriptstyle\sum\nolimits_{l = 1}^n {{\kappa _l}}  = K\hfill\atop
\scriptstyle{\kappa _1}, \cdots ,{\kappa _n} \ge 1\hfill} {\prod\limits_{l = 1}^{n - 1} {p_{suc,{\kappa _l}}^l} p_{out,{\kappa _n}}^n} }_{{\chi _n}} +  \cdots  + \underbrace {\sum\limits_{\scriptstyle\sum\nolimits_{l = 1}^L {{\kappa _l}}  = K\hfill\atop
\scriptstyle{\kappa _1}, \cdots ,{\kappa _L} \ge 1\hfill} {\prod\limits_{l = 1}^{L - 1} {p_{suc,{\kappa _l}}^l} p_{out,{\kappa _L}}^L} }_{{\chi _L}},
\end{align}
\hrulefill
\end{figure*}


\bibliographystyle{ieeetran}
\bibliography{THz}
\end{document}